\theoremstyle:=definition,remark,plain\do{%
    \expandafter\g@addto@macro\csname th@\theoremstyle\endcsname{%
      \addtolength\thm@preskip{.5\baselineskip plus .2\baselineskip minus .2\baselineskip}
      \addtolength\thm@postskip{.5\baselineskip plus .2\baselineskip minus .2\baselineskip}
    }%
  }
\DeclareSymbolFont{largesymbols}{OMX}{zplm}{m}{n}
\newcommand{\superimpose}[2]{%
  {\ooalign{$#1\@firstoftwo#2$\cr\hfil$#1\@secondoftwo#2$\hfil\cr}}}
  \newcommand{\PlusX}{\mathpalette\superimpose{{\times}{+}}     }
  \newcommand{\PlusXb}{\mathpalette\superimpose{{\PlusX}{\bullet}}     }
\newcommand{\sFill}[1]{
		\begin{tikzpicture}[scale=.125, remember picture]
	\begin{scope}[overlay]
		\draw[clip] (0,.5) circle [radius=1];
	\end{scope}
	\begin{scope}[overlay]
		\draw[clip] (0,.5) circle [radius=1];
		\node[scale=.8] at (0,.5) {${#1}$};
	\end{scope}
	\end{tikzpicture}
}
\newcommand{\yF}[1]{\none[{\sFill{#1}}]}
\newcommand{\superY}[1]{\begin{array}{c} \scalebox{1}{\begin{ytableau}#1 \end{ytableau} } \end{array}}
\let\l\Bigl
\let\r\Bigr
\theoremstyle{definition}
\newtheorem{example}{Example}[section]
\theoremstyle{plain}
\newtheorem{theorem}[example]{Theorem}
\newtheorem{lemma}[example]{Lemma}
\newtheorem{proposition}[example]{Proposition}
\newtheorem{corollary}[example]{Corollary}
\theoremstyle{remark}
\newtheorem{remark}[example]{Remark}
\newtheorem{note}[example]{Note}
\let\d\partial
\numberwithin{equation}{section}
\def\be{\begin{equation}}
\def\ee{\end{equation}}
\newcommand{\exterior}{\textstyle\bigwedge}
\def\ta{\theta}
\def\La{\Lambda}
\def\la{\lambda}
\def\Om{\Omega}
\def\dd{\mathrm{d}} 
\def\ii{\mathfrak{i}} 
\def\m{{\mathrm m}}
\begin{document}

\title{Bernstein operators and super-Schur functions: combinatorial aspects}  

\author{L.~Alarie-V\'ezina}

\address[Ludovic Alarie-V\'ezina]{
D\'epartement de Physique, de G\'enie Physique et d'Optique \\
Universit\'e Laval \\ 
Qu\'ebec, Canada, G1V~0A6.
}

\email{ludovic.alarie-vezina.1@ulaval.ca}

\author{O.~Blondeau-Fournier}

\address[Olivier Blondeau-Fournier]{
D\'epartement de Physique, de G\'enie Physique et d'Optique \\
Universit\'e Laval \\ 
Qu\'ebec, Canada, G1V~0A6.
}

\email{olivier.b-fournier.1@ulaval.ca}

\author{L.~Lapointe}

\address[Luc Lapointe]{
Instituto de Matem\'atica y F\'{\i}sica \\ 
Universidad de Talca \\
 2 norte 685, Talca, Chile.
}

\email{lapointe@inst-mat.utalca.cl}

\author{P.~Mathieu}

\address[Pierre Mathieu]{
D\'epartement de Physique, de G\'enie Physique et d'Optique \\
Universit\'e Laval \\ 
Qu\'ebec, Canada, G1V~0A6.
}

\email{pmathieu@phy.ulaval.ca}

\thanks{\today}
\maketitle


\begin{abstract}{
    The Bernstein vertex operators, which can be used to build
    recursively the Schur functions, are extended to superspace.
    Four families of super vertex operators are defined, corresponding to the four natural families of Schur functions in superspace.  Combinatorial proofs
    that the super Bernstein vertex operators
    indeed build   the Schur functions in superspace recursively are provided.
    We briefly mention a possible realization, in terms of symmetric functions
    in superspace,
    of the super-KP hierarchy, where the tau-function naturally expands in one of the super-Schur bases. 
}
\end{abstract}

\section{Introduction}

The ubiquitous Schur function $s_\la(z)$ in infinitely many variables $z_1, z_2, \ldots$ and associated to a partition $\la$, can be obtained from the Macdonald polynomial $P_\la(z;q,t)$  by letting $q=t$ \cite{MacSym95}.  
Equivalently, one can add an intermediate step by setting $q=t^\alpha$ and taking the limit $t\rightarrow 1$, thereby generating the Jack symmetric polynomial {$P_\la^{(\alpha)}(z)$}; letting $\alpha=1$  then gives back the Schur function
  $s_\la(z)$.  An important feature of the Schur functions is the
  simplicity of their expansion in the monomial basis: all expansion coefficients are non-negative integers that have a combinatorial interpretation as fillings of certain diagrams.

\medskip
Following this procedure to obtain a supersymmetric extension of the  Schur functions turns out to be somewhat more intricate. 
The Macdonald superpolynomial $P_\La(z,\eta;q,t)$, which is parametrized by a superpartition $\La$ (defined in Section \ref{SSandGF}), has been introduced in \cite{BDLM12, BDLM12_2}.  
 In addition to be polynomials in the $z$ variables, it also depends on infinitely many {odd} (or anticommuting) variables $\eta_1, \eta_2, \ldots$, {(often) referred to as Grassmann variables. }
We should note that, as is the case for the Macdonald polynomials,
the two free parameters $q$ and $t$  are related to the parameters of the (supersymmetric) Ruijsenaars-Schneider model \cite{BDM15} for which the Macdonald (super)polynomials are eigenfunctions.   
When $q=t^\alpha$ and $t\to 1$, $P_\La(z,\eta;q,t)$ reduces to the Jack superpolynomial $P_\La^{(\alpha)}(z,\eta)$, which is an eigenfunction of the supersymmetric Calogero-Sutherland model \cite{DLM_CMS}.

\medskip

Somewhat surprisingly, the specialization of the Jack superpolynomials at $\alpha=1$ does not have appealing combinatorial properties such as a positive expansion coefficients in the monomial basis. 
The combinatorially interesting basis whose elements deserve to be called Schur superfunctions, denoted $s_\La(z,\eta)$, has been uncovered in \cite{BDLM12}.  
They are obtained from the Macdonald superpolynomial $P_\La(z,\eta;q,t)$ by setting $q=t=0$ (which is not a particular specialization of the corresponding Jack superpolynomial):
\be\label{defS} 
s_\La(z,\eta)=P_\La(z,\eta;0,0).
\ee 
 Not only their monomial expansion coefficients are  positive integers, but they also 
satisfy a generalized version of the Macdonald positivity conjecture (see for instance \cite{BDLM12, BLM15}).

Because the Macdonald superpolynomials do not satisfy anymore the inversion symmetry $q,t \leftrightarrow q^{-1},t^{-1}$,  it is equally natural to consider the 
limiting case 
\be 
\bar s_\La(z,\eta)= 
{P_\La(z,\eta;\infty,\infty)}
\ee
which are also called Schur superfunctions. 

\medskip

The description of the family
of super-Schurs does not end here. In the usual case, the Schur functions are self-dual with respect to the Hall scalar product. Now, this scalar product has a natural extension to superspace with respect to which  $s_\La$ and $\bar s_\La$ are almost dual:  denoting by $s_\La^*(z,\eta)$ and $\bar s^*_\La(z,\eta)$
the dual bases to $s_\La$ and $\bar s_\La$ respectively, that is, such that 
\be 
\langle s_\La^{\phantom *},s_\Om^*\rangle = \delta_{\La\Om} \, ,
\qquad  \qquad  \langle \bar s_\La^{\phantom *},\bar s_\Om^*\rangle = \delta_{\La\Om}
\ee
we have that $s_\La^*(z,\eta)$ and $\bar s^*_\La(z,\eta)$ are obtained from $\bar s_\La(z,\eta)$ and $s_\La(z,\eta)$ through the action of a natural automorphism (see \eqref{sesbeomsign1} for more details).

Thus, disregarding the supposedly uninteresting $\alpha=1$ version of Jack superpolynomials, we are left with four versions of the super-Schurs (although, as we just mentioned, we can consider them as two pairs related through the action of an automorphism).

\medskip

Even though they are expected to be fundamental objects, the super-Schurs do not have a simple definition akin to the determinantal definition of the usual Schurs. The various ways of defining for instance the super-Schurs $s_\Lambda(z,\eta)$ known so far are:
\begin{enumerate}
\item from the Macdonald superpolynomials via \eqref{defS};
\item from the Pieri rules \cite{BM1, JL17}; 
\item combinatorially, in terms of fillings of a super-diagram \cite{BM1, JL17}. 
\end{enumerate}
Here we present another construction by generalizing the Bernstein vertex operator $B(z)$.  To be more precise, the Schur functions can be defined from their modes, $B(z)= \sum_{k\in \mathbb Z}B_k z^k$ according to
\be
s_\la = B_{k_1}\, B_{k_2}\ldots B_{k_n} \cdot { 1} \qquad \text{if}\qquad \la = (k_1, k_2, \ldots, k_n).
\ee
In this article, we display the superspace counterpart of that construction for the four versions of super-Schur functions. 
The motivation for this work, apart from its intrinsic interest, is spelled out in the conclusion, where we sketch our plan to use these results to formulate a supersymmetric extension of the KP hierarchy,  to be worked out in a sequel article.

\medskip

The present article is organized as follows. In Section~\ref{SpBo} we recall some basic results concerning symmetric functions which are then used to construct explicitly the Bernstein operators.  A brief review of the concepts of superpartitions and superpolynomials are presented in Section~\ref{SuperP}, emphasizing the generalization of the multiplicative classical bases and the Schur functions. As already indicated, there are four types of Schur superpolynomials and consequently, there are four versions of super Bernstein operators.  
We begin Section~\ref{SBI} by introducing two of them, while the other two families are presented at the end of the section after a brief overview of the properties of a second automorphism.
  In Section~\ref{PerpofBCas}, we study the adjoints of the super Bernstein operators and their relation with the negative modes that act on super-Schur functions by removing columns.  
In Section~\ref{ToSKPzzz}, we conclude with a presentation of the general plan, to be addressed in a sequel article, regarding the connection with the super-KP hierarchy.    
Three appendices complete this work.  Since the construction of Schur (super)polynomials is based on Pieri rules, we review, in Appendix~\ref{PieriApp}, the four rules pertaining the the super-case.  More precisely, we review three of them and work out the required fourth one, which turns out to be new.  Appendix~\ref{ExtraRels} is devoted to detailing basic but non standard manipulations of the classical bases.  Finally, a proof of a proposition which parallels closely one already included in the main text has been relegated to Appendix~\ref{Sproof_typeI}.

\section*{Acknowledgements}
The authors acknowledge the support of the Centre de Recherches Math\'ematiques at the Universit\'e de Montr\'eal where part of this work was done.     
This work was partially supported by FONDECYT (Fondo Nacional de Desarrollo Cient\'{\i}fico y Tecnol\'ogico de Chile) regular grant \#{1170924} (L.~L.) and by NSERC.

\section{Schur polynomials and Bernstein operators}
\label{SpBo}
\subsection{Review of some results in symmetric function theory}
Let $A_{n, \mathbb Q}=\mathbb{Q}[z_1, \ldots, z_n]^{\mathfrak S_n}$ be the ring of symmetric polynomials in the variables $z_1, \ldots, z_n$.  
The symmetric group $\mathfrak S_n$ acts by permuting the variables. Set
\be x_i=\frac{1}{i} ( z_1^i + z_2^i + \ldots ), \qquad i=1,2,\ldots
 \ee  
 (the usual $i$-th power-sum, up to a constant).   
 In the following, we shall be interested in the limit when the number of variables is infinite ($n\rightarrow \infty$). We may thus simply denote by  $A_{\mathbb Q}=\mathbb{Q}[x_1, x_2, \ldots]$ the ring of symmetric functions. 
  Recall that bases are parametrized by partitions $\la=(\la_1, \la_2, \ldots)$, and that for a multiplicative basis we write $x_\la = x_{\la_1}x_{\la_2}\cdots$.
  Denote by $e_\la$, $h_\la$, $m_\la$ and $s_\la$ respectively the elementary, homogeneous, monomials and Schur symmetric functions.  Since any basis is independent over  $A_{\mathbb Q}$, we have naturally $e_i=e_i(x_1, x_2, \ldots)$,  $h_i=h_i(x_1, x_2, \ldots)$, etc.   We will (often) drop the explicit dependence on the variables $x_i$.
  
The partition function is given by:
\be\label{partfZ}
\mathrm Z(x;\bar x) = \sum_{\la} \frac{ x_\la \bar x_\la }{\mathsf z_\la} 
= \sum_{\la}h_\la(x) m_\la(\bar{x}) 
=\sum_{\la}s_\la(x) s_\la(\bar{x}) 
=\exp \Bigl(\sum_{k>0} k x_k \bar{x}_k \Bigr) 
\ee
where $\bar{x}=\bar{x}_1, \bar{x}_2, \ldots$ represents another set of variables.  The weight $\mathsf z_\la$ is given by\footnote{The weight  $\mathsf z_\la$ differs from the usual one $z_\lambda$ by a factor ${\prod_i \la_i}$ squared.}
\be
\mathsf z_\la =  \frac{|\mathrm{Aut}(\la)|}{\prod_i \la_i}
\ee
where $|\mathrm{Aut}(\la)|$ is the order of the group of automorphism that fixes $\la$:
\be\label{defeqAutla}
|\mathrm{Aut}(\la)| = \prod_{i\geq 1} n_i ! \qquad \text{if} \qquad \la=(1^{n_1}, 2^{n_2}, \ldots ).
\ee
Two particular specializations of the partition function $\mathrm Z(x;\bar x)$ give the generating series of the homogeneous and the elementary symmetric functions: 
\begin{align}
H(z) &= \sum_{k=0}^\infty z^k h_k = 
\mathrm Z\Bigl(x; z, \frac{z^2}{2}, \frac{z^3}{3}, \ldots\Bigr)  {=  \exp\l( \sum_{k>0}z^k x_k \r) }
\\
E(z)& = \sum_{k=0}^\infty z^k e_k=  
\mathrm Z\Bigl(x; z, - \frac{z^2}{2}, \frac{z^3}{3}, \ldots\Bigr) {= \exp \l(- \sum_{k>0}(-z)^k x_k \r)}
\label{genseriesEz}
\end{align}
with $h_0=e_0=1$. 
It is immediate that $H(z)E(-z)=1$.  Note  that  
$H(z)$ is also the generating series of the Schur functions indexed by partitions with only one part.
Expression \eqref{partfZ} is equivalent to the formulation of a scalar product $\langle * , * \rangle:A_{\mathbb Q} \times A_{\mathbb Q}\rightarrow \mathbb Q$,  given by
\be\label{sp1}
\langle x_\la, x_\mu \rangle = \delta_{\la\mu}
\mathsf z_\la.
\ee
 With respect to this scalar product, we have the following dualities
\be
\langle h_\la, m_\mu\rangle =  \langle s_\la, s_\mu\rangle = \delta_{\la \mu},
\ee
which imply in particular that the Schur functions form an orthonormal basis.

\medskip

For an element $f\in A_{\mathbb Q}$, we denote the adjoint map $\perp\,:\,$ $f\mapsto f^\perp$ defined to satisfy
\be\label{defadj}
\langle f \ast, \ast \rangle =  \langle \ast, f^\perp \ast \rangle
\ee
for arbitrary elements $\ast$.   Under the adjoint map, the element $x_k$ is {mapped} to
\be\label{xperpno1}
x_k^\perp = \frac{1}{k} \frac{\partial}{\partial x_k}= \frac{1}{k} \partial_{x_k}
\ee
so that an element $f$ which depends on $x$ transforms as
\be
f^\perp(x_1, x_2, x_3, \ldots) = f(\partial_{x_1}, \tfrac12 \partial_{x_2}, \tfrac13 \partial_{x_3}, \ldots) .
\ee
For instance, the adjoint of the generating series {$E(z)$ }given in \eqref{genseriesEz} {is} easily found to be
\be  \begin{split}
  E^\perp (z) &= \sum_{k=0}^\infty z^k e_k^\perp = \exp \Bigl(- \sum_{k>0}\frac{(-z)^k}{k}\partial_{x_k} \Bigr).\label{gen_serE_perp}
  \end{split}
\ee

\subsection{Bernstein operators}

The Bernstein vertex operator $B(z)$ is defined by
\be\label{Bvertexxdelx}
B(z) = \sum_{k\in\mathbb Z}B_kz^k  =  \exp\l( \sum_{k>0}z^k x_k \r) \,  \exp \l(- \sum_{k>0} \frac{z^{-k}}{k} \partial_{x_k} \r)
\ee
and the Bernstein operators (or modes) $B_k, k\in \mathbb Z$ can be obtained from the (residue) relation
\be
B_k = \oint \frac{\dd z}{2\pi \ii} z^{-k-1} B(z).
\ee
By noting that $B(z) = H(z) E^{\perp}(-z^{-1})$, it is straightforward to obtain the explicit expression for the modes of the vertex operator,
\be\label{Bern_m}
B_k = \sum_{r\geq 0} (-1)^r h_{k+r}^{\phantom \perp} \circ e_r^\perp
\ee
where the symbol $\circ$ denotes the composition (or regular multiplication, following the notation of \cite{MacSym95}).
Observe that the Bernstein operator $B(z)$ acting on the identity, which is denoted by 
$B(z) \cdot1$,  is simply the generating function of Schur functions indexed by partitions with a single part.  In fact, Bernstein operators can be viewed as creation operators for the spectrum of Schur functions.
\begin{theorem} \cite{MacSym95} \label{TheoBBB}
Given a partition $\la= (\la_1, \la_2, \ldots, \la_n)$, we have that
\be\label{SchurasBm}
s_\la = 
B_{\la_1} \ldots B_{\la_n} \cdot 1
\ee
In other words, the Schur function $s_\la$ is obtained by the action of an ordered sequence of Bernstein operators acting on the identity.
\end{theorem}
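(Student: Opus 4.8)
The plan is to deduce \eqref{SchurasBm} from the single \emph{creation identity}
\begin{equation}
B_m\,s_\mu = s_{(m,\mu_1,\ldots,\mu_\ell)}\qquad\text{whenever }\mu=(\mu_1,\ldots,\mu_\ell)\text{ is a partition and }m\ge\mu_1,\tag{$\star$}
\end{equation}
and then to iterate. The base case is immediate: acting with \eqref{Bvertexxdelx} on $1$ gives $B(z)\cdot 1=\exp(\sum_{k>0}z^kx_k)\cdot 1=H(z)$, so $B_{\la_n}\cdot 1=h_{\la_n}=s_{(\la_n)}$. For the inductive step, if $B_{\la_{k+1}}\cdots B_{\la_n}\cdot 1=s_{(\la_{k+1},\ldots,\la_n)}$, then since $\la$ is a partition we have $\la_k\ge\la_{k+1}$, so $(\star)$ applies with $\mu=(\la_{k+1},\ldots,\la_n)$ and $m=\la_k$; a downward induction on $k$ then gives \eqref{SchurasBm}.

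To prove $(\star)$ I would expand both sides in the Schur basis. Inserting the mode formula \eqref{Bern_m}, namely $B_m=\sum_{r\ge 0}(-1)^r h_{m+r}\circ e_r^{\perp}$, together with the dual Pieri rule $e_r^{\perp}s_\mu=\sum_{\nu}s_\nu$ (over partitions $\nu$ such that $\mu/\nu$ is a vertical $r$-strip) and the Pieri rule $h_{m+r}\,s_\nu=\sum_{\kappa}s_\kappa$ (over partitions $\kappa$ such that $\kappa/\nu$ is a horizontal $(m+r)$-strip) yields
\begin{equation}
B_m\,s_\mu=\sum_{\kappa}c_\kappa\,s_\kappa,\qquad c_\kappa=\sum_{\nu\in\mathcal A_\kappa}(-1)^{|\mu|-|\nu|},
\end{equation}
where $\mathcal A_\kappa$ is the set of partitions $\nu$ with $\mu_i-1\le\nu_i\le\mu_i$ and $\kappa_{i+1}\le\nu_i\le\kappa_i$ for all $i$ (so that $\mu/\nu$ is a vertical strip and $\kappa/\nu$ a horizontal strip); this also forces $|\kappa|=|\mu|+m$. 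It then remains to show that $c_\kappa=\delta_{\kappa,(m,\mu)}$.

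For this I would fix $\kappa$ with $|\kappa|=|\mu|+m$ and $\mathcal A_\kappa\ne\emptyset$. The defining inequalities force $\kappa_{i+1}\le\mu_i$ for all $i$ and $\kappa_i=0$ for $i>\ell+1$, hence $\kappa_1=|\kappa|-\sum_{i\ge 2}\kappa_i\ge(|\mu|+m)-|\mu|=m\ge\mu_1$. Group $\{1,\ldots,\ell\}$ into the maximal blocks on which $\mu$ is constant; in a block of value $c$ call a row $i$ \emph{free} if $\{c-1,c\}\subseteq[\kappa_{i+1},\kappa_i]$, and observe, using that $\kappa$ is weakly decreasing, that each block contains at most one free row. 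If some block has a free row, let $i^\ast$ be the smallest free row and define $\iota\colon\mathcal A_\kappa\to\mathcal A_\kappa$ to flip $\nu_{i^\ast}$ between $c-1$ and $c$; one checks that $\iota$ is a well-defined, fixed-point-free involution that changes $|\nu|$ by exactly $1$, hence is sign-reversing, so $c_\kappa=0$. If instead every row is forced, a short argument using $\kappa_1\ge\mu_1$ and the monotonicity of $\mu$ excludes any row being forced to $\mu_i-1$, so $\nu=\mu$ is the only element of $\mathcal A_\kappa$ and one reads off $\kappa=(m,\mu_1,\ldots,\mu_\ell)$ with $c_\kappa=1$. This establishes $(\star)$, and hence the theorem.

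I expect the delicate point to be the verification that $\iota$ keeps $\nu$ a partition satisfying both strip conditions: this is where the equal-part blocks of $\mu$ and the interlacing with $\kappa$ must be handled with care (one shows that the in-block neighbours of $i^\ast$, being forced, already sit at the value making both choices of $\nu_{i^\ast}$ admissible, while out-of-block neighbours differ by at least one in $\mu$-value); everything else is bookkeeping. As an alternative to the involution, $(\star)$ can instead be obtained by a Laplace expansion: expand the $(\ell+1)\times(\ell+1)$ Jacobi--Trudi determinant for $s_{(m,\mu)}$ along its first row and match the resulting alternating sum of $h$'s times $\ell\times\ell$ minors against $\sum_{r\ge 0}(-1)^r h_{m+r}\,(e_r^{\perp}s_\mu)$, using the dual Jacobi--Trudi expression for $e_r^{\perp}s_\mu$; this trades the combinatorial involution for a determinantal identity but requires the same care with compositions that fail to be partitions.
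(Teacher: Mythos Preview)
Your proof is correct and follows essentially the same approach as the paper: both reduce to the creation identity $B_m s_\mu=s_{(m,\mu)}$ for $m\ge\mu_1$, expand via the Pieri and dual Pieri rules, and cancel the unwanted terms by a sign-reversing involution on the intermediate data (your pairs $(\nu,\kappa)$ are exactly the paper's decorated diagrams, and flipping $\nu_{i^\ast}$ between $c$ and $c-1$ is the paper's interchange of an empty transmutable box with a $\PlusX$ box). The paper's version is phrased diagrammatically because it is set up to extend to superpartitions with circles; your block-by-block organization of the rows of $\mu$ is a clean way to handle the partition condition on $\nu$ in the ordinary case, and the deferred checks you flag (that $\iota$ preserves both strip conditions and that ``no free row'' forces $\nu=\mu$, hence $\kappa=(m,\mu)$) go through exactly as you outline.
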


  A direct proof of this result is sketched in  \cite[p. 96]{MacSym95}, and a combinatorial proof is included as a special case of the proof presented in Section \ref{Sproof_typeIe}.

\medskip

When computing \eqref{SchurasBm}, one can either use \eqref{Bern_m},
which depends on the Pieri rules, or \eqref{Bvertexxdelx}, which relies on multiplications and derivations in the variables $x_i$.  We shall be more interested in the first method.  Recall the Pieri rule:
\be
h_r s_\la = \sum_\mu s_\mu,
\ee
where the sum is over partitions such that $\mu/ \la$ is a horizontal $r$-strip and the \emph{inverse} Pieri rule 
\be 
e_r^\perp s_\la = \sum_\mu s_\mu
\ee
where the sum is over partitions such that  $\la/\mu$ is a vertical $r$-strip.  Note that the formula for the inverse Pieri rule follows from the definition of the adjoint \eqref{defadj} and the fact that Schur functions form an orthonormal basis.  It acts by {\it removing} a vertical strip.

\medskip

To illustrate this computation,  consider the following simple example.  Suppose we want to  check Theorem~\ref{TheoBBB}  in the case  $\la=(3,1)$. We write
\be\nonumber
s_{(3,1)}=B_3 B_1\cdot 1
\ee
with $B_1\cdot 1 = h_1= s_{(1)}$.  Then, for $B_3s_{(1)}$,  we see from  \eqref{Bern_m} that only the terms $r=0$ and $r=1$ can contribute since it is not possible to remove a 2 vertical strip from $\la=(1)$.  Thus
\be\nonumber
B_3B_1 \cdot 1=h_3s_{(1)}-h_4= (s_{(4)}+s_{(3,1)}) - s_{(4)}
\ee
which is the desired result.

\medskip

In brief, the understanding of the Pieri rule (and its inverse) may be used to obtain creation operators for the Schur functions. This is the approach we shall consider for the super-Schur functions.

\section{Superpolynomials and generating functions}
\label{SuperP}
\subsection{Superpartitions and multiplicative bases}\label{SSandGF}
We now consider the extension of the  ring of symmetric polynomials  $A_{n,\mathbb Q}$  with the exterior algebra of Grassmann (odd) variables $\eta_1, \ldots, \eta_n$
\be
\left(\mathbb{Q}[z_1, \ldots, z_n] \otimes \exterior [\eta_1, \ldots, \eta_n] \right)^{\mathfrak S_n}, \qquad  \quad \eta_i \eta_j= -\eta_j \eta_i.
\ee
  The action of the symmetric group $\mathfrak S_n$ is extended to the variables $\eta_i$ by permuting them in the same way as  the $z_i$ variables (which it is often referred to as the diagonal action).  
Superpolynomials that are invariant with respect to $\mathfrak S_n$ are called symmetric superpolynomials, and in the limit $(n\rightarrow \infty)$ of infinitely many variables, we may simply consider the ring algebra
\be
\mathcal A_{\mathbb Q} = A_{\mathbb Q}  \otimes \exterior [\ta_1, \ta_2, \ldots]
\ee
of symmetric superfunctions in variables $x_i$ and $\ta_i$ where the latter are the odd component of the power-sum superfunctions:
\be \theta_i=\eta_1 z_1^{i-1}+ \eta_2 z_2^{i-1}+ \ldots
\ee
(Note that the variable $\ta_i$ corresponds to the fermionic power-sum $\tilde p_{i-1}$ that can be found for instance in \cite{DLM_basis}).

\medskip 

Elements of $\mathcal A_{\mathbb Q}$ are indexed by superpartitions $\La=(\La^{\rm a}; \La^{\rm s})=(\La_1, \ldots, \La_{\rm m}; \La_{\rm m+1}, \ldots, \La_N)$ where $\La^{\rm a}$ is a strictly decreasing partition (that may include a 0), $\La^{\rm s}$ is a non increasing  partition, and 
$\mathrm{m} = \mathrm{m}(\La)$ will denote the fermionic degree of $\La$ (i.e.~the number of parts in $\La^{\rm a}$).   For later use, we denote by $\La^+$ the partition obtained by removing the semi-colon and reordering the parts in nonincreasing order.  
We let  $\La^* = \La^+$ and $\La^\circledast= (\La^{\rm a}+1; \La^{\rm s})^+$ where $\La^{\rm a}+1$ means that we add one to every part of $\La^{\rm a}$.  
The parts  of $\La^{\rm a}$ 
are said to be fermionic while those of $\La^{\rm s}$ 
are called bosonic.  Obviously, if $\La^*_i\neq \La^\circledast_i$  then $\La_i^*$ is fermionic whereas if $\La^*_i= \La^\circledast_i$ then that part is bosonic.    

\medskip

The diagrammatical representation of a superpartition is obtained as follows:  we first represent the diagram of $\La^\circledast$ and then transform the boxes of the skew diagram $\La^\circledast/\La^*$ in circles.  
 Here is an example
 \be\label{exsuperpart1}
 \La=(8,6,3,2,0;5,3)  \quad \rightarrow \quad 
\scalebox{1.1}{$\superY{
\,&\,&\,&\,&\,&\,&\,&\,& \yF{} \\
\,&\,&\,&\,&\,&\,&\yF{}\\
\,&\,&\,&\,&\\
\,&\,&\,& \yF{}\\
\,&\,&\\
\,&\,&\yF{}\\
\yF{}
}$}
\ee
The conjugate of a superpartition $\Lambda$, denoted $\La'$ is obtained by interchanging the rows and columns in the diagrammatical representation.  For instance, using the example \eqref{exsuperpart1}, we have $(8,6,3,2,0;5,3)' = (6,5,3,1,0;6,3,2,1)$.

\medskip

Given a superpartition $\La$, we write $\mathcal{X}_\La$ for the power-sum (multiplicative) basis over $\mathcal A_{\mathbb Q}$, and defined by
\be\label{XXLA}
\mathcal{X}_\La = \ta_{\La_1+1}\cdots \ta_{\La_{\mathrm{m}}+1} \, 
x_{\La_{\mathrm m+1}} \cdots x_{\La_N}.
\ee
Let $\bar x=\bar x_1, \bar x_2, \ldots$ and $\bar \ta = \bar \ta_1, \bar \ta_2, \ldots$ be two other sets of indeterminates (even and  odd respectively), and let $\bar{\mathcal{X}}_\La$ be defined as \eqref{XXLA} but with $\ta_j$ replaced by $\bar \ta_j$ and $x_j$ replaced by $\bar x_j$ for all $j$'s.  
The partition function in superspace is
\be\label{superpartZ}
\mathcal Z(x,\ta;\bar x, \bar \ta) =\exp\l( \sum_{k>0} k x_k \bar{x}_k \r) \, \exp \l(\sum_{k>0} \ta_k \bar{\ta}_k \r) = 
\sum_\La   \frac{\mathcal{X}_\La \, \bar{\mathcal{X}}_\La}{\mathsf z_\La} ,
\ee
  where the weight is now
  \be
   \mathsf z_\La = (-1)^{\binom{{\mathrm m}}{2}} 
    \frac{|\mathrm{Aut}(\La^{\rm s})|}{\prod_i \La^{\rm s}_i}. 
  \ee
  (Recall that $\mathrm{m}=\mathrm{m}(\Lambda)$ is the fermionic degree of $\La$ and the expression for $|\mathrm{Aut}(\la)|$ is given in \eqref{defeqAutla}).  
  The scalar product between elements of $\mathcal A_{\mathbb Q}$ is defined by
\be\label{pssuperpptoto}
\langle * , * \rangle: \mathcal A_{\mathbb Q} \times \mathcal A_{\mathbb Q} \rightarrow \mathbb Q\qquad \quad
 \mathcal X_\La\times\mathcal X_\Omega \mapsto \langle \mathcal X_\La, \mathcal X_\Omega \rangle = \delta_{\La \Omega}\mathsf z_\La.
\ee

We denote by $h_\La$ and $e_\La$ respectively the homogeneous and elementary symmetric functions in superspace {which are also  multiplicative bases over $\mathcal A_{\mathbb Q}$ }\cite{DLM_basis}, i.e.,
\be 
f_\La=\tilde f_{\La_1}\cdots \tilde f_{\La_{\m}}f_{\La_{\m+1}}\cdots f_{\La_N}\qquad f=h,e
\ee
Generating functions for the constituents $(h_n,\tilde h_n)$ and $(e_n,\tilde e_n)$ can be used to defined these two  bases in terms of $x_i$ and $\ta_i$.  
These are 
 obtained by specializing each $\bar x_k$ and $\bar\ta_k$ in the partition function \eqref{superpartZ}, i.e., by setting $\bar{x}_k=z^k/k, \bar{\ta}_k=-\eta z^{k-1}$ and (respectively) $\bar{x}_k=-(-z)^k/k, \bar{\ta}_k=-\eta (-z)^{k-1}$:
\begin{align}\label{GFhe}
H(z;\eta)& = \sum_{k\geq 0} z^k ( h_k + \eta \tilde{h}_k) =   \exp\l( \sum_{k>0} z^k \l(x_k + \frac{\eta}{z} \ta_k\r)\r)  \\ \label{GFhee}
E(z;\eta)& = \sum_{k\geq 0} z^k ( e_k + \eta \tilde{e}_k) =   \exp \l(- \sum_{k>0} (-z)^k \l(x_k + \frac{\eta}{z} \ta_k\r)\r) .
\end{align}
We introduce the automorphism $\omega: \mathcal A_{\mathbb Q} \rightarrow \mathcal A_{\mathbb Q} $ defined as
\be\label{superomega}
\omega:\; x_n \mapsto (-1)^{n-1} x_n, \qquad \ta_n \mapsto (-1)^{n-1} \ta_n, \qquad n=1,2, \ldots
\ee
This provides a duality relationship between $H(z;\eta)$ and $E(z;\eta)$:
\be
H(z;\eta) = \omega\bigl( E(z;\eta) \bigr)
\ee
which leads to the relations
\be\label{dualEH}
\omega(e_n) = h_n, \qquad \omega(\tilde{e}_n) = \tilde{h}_n; \qquad \omega(h_n) = e_n, \qquad \omega(\tilde h_n)= \tilde e_n.
\ee

Finally, the adjoint map $\perp\, :\,$ $f\mapsto f^\perp$ for any element $f\in \mathcal A_{\mathbb Q}$ is defined similarly as the adjoint map in \eqref{defadj} but using instead the scalar product \eqref{pssuperpptoto}. Thus, in addition to \eqref{xperpno1}, the variable $\ta_n$ transforms as $\ta_n^\perp=\partial_{\ta_n}$, and for a general superfunction $f$ of the $x_i, \ta_i$ we have
\be\label{fadjinsupers}
f^\perp(x_1, x_2, \ldots ; \ta_1, \ta_2, \ldots)=f( \partial_{x_1}, \tfrac12 \partial_{x_2}, \ldots ;  \partial_{\ta_1}, \partial_{\ta_2}, \ldots).
\ee

For instance, the adjoint of the generating series for the homogeneous functions is
\be
H^\perp(z;\eta) = \sum_{k\geq 0} z^{k} (h_k^\perp - \eta  \tilde{h}^\perp_k) = \exp \l(\sum_{k>0} z^k \l( \frac{1}{k} \partial_{x_k} - \frac{\eta}{z} \partial_{\ta_k} \r)\r) .
\ee
Note in this last expression the relative minus sign coming from the reordering of odd variables in the adjoint map, i.e.~$(\eta \tilde h_k)^\perp = \tilde h_k^\perp \eta= -\eta \tilde h_k^\perp$.

\subsection{Schur superpolynomials}
As mentioned in the introduction, there are four families of superfunctions which can be considered genuine analogues of the usual Schur functions.  We call them Schur superpolynomials (of super-Schur for short) by analogy.   They naturally appear in the decomposition of the partition function in superspace \eqref{superpartZ}:
\be
  \mathcal Z(x,\ta; \bar x, \bar \ta)= \sum_\La s_\La^{\phantom *}(x,\ta) s_\La^*(\bar{x},\bar{\ta}) =  \sum_\La \bar{s}_\La^{\phantom *}(x,\ta) \bar{s}_\La^*(\bar{x},\bar{\ta})
  \ee
  or equivalently, they satisfy the duality relations
\be\label{orthops_ssesbsbe}
\langle  s_\La^{\phantom *}, \,  s_\Omega^* \rangle = \delta_{\La \Omega}, \qquad
 \langle \bar{s}_\La^{\phantom *}, \,  \bar{s}_\Omega^* \rangle = \delta_{\La \Omega}
\ee
w.r.t.~the scalar product \eqref{pssuperpptoto}.  Only two families are truly  independents, namely $s_\La$ and $\bar s_\La$; the remaining ones are obtained through the action of an automorphism (see below).    
Their original definition was presented in \cite{BDLM12_2} as a limiting case of the Macdonald superpolynomials, and a combinatorial definition (by mean of tableaux fillings) was presented as a conjecture in \cite{BM1}.  In the following sections, we shall use another approach to give a precise definition of these super-Schur functions: in terms of creation operators similar to the Bernstein operators.    
We will refer to the different families of super-Schur functions as of type I, I$^*$, II, and II$^*$:
\be\label{les4famss}
\begin{tabular}{| l || c | c | c | c |} 
\hline
Type: & I & I$^*$ &  II  &  II$^*$  \\
\hline\hline
Super-Schur: &  $s_\La^{\phantom *}$   &  $s_\La^*$     &   $\bar{s}^{\phantom *}_\La $ & $ \bar{s}_\La^*$   \\
\hline
\end{tabular}
\ee
and we will denote accordingly the corresponding super Bernstein
operators.%
\footnote{We stress the different meanings {of} the upper script $*$ when used for $\La$ or for $s_\La$.  {In} particular, we distinguish $s^*_\La$ and $s_{\La^*}$: the former is the dual of the super-Schur $s_\La$, while the latter is an ordinary Schur function (since $\La^*$ is an ordinary partition).}

\medskip

It can be shown \cite{BDLM12_2} that the  automorphism $\omega$ in \eqref{superomega} yields the following relations between the different super-Schurs:
\be\label{sesbeomsign1}
s_\La^* =  \omega( \bar s_{\La'}), \qquad \bar{s}_\La^* =  \omega  (s_{\La'})
\ee
where we recall that $\La'$ stands for the conjugate superpartition.  We then see that the super-Schurs of types I and II are sufficient in order to obtain types I$^*$ and II$^*$.   Equivalently, in the next section, we will consider types I and  I$^*$ as the two independent families, and derive types II and II$^*$ from them.  

 \medskip

 The different super-Schurs indexed by superpartitions with  a single row or a single column, given   in terms of the multiplicative bases, are summarized in Table \ref{tab1}.  
Note that for a regular partition, the four families of super-Schur all correspond to the same Schur function.

\begin{table}[ht]
\caption{Schur superfunctions with one row or one column, for $r\geq 0\, \, (e_0=h_0=1)$.}
\label{tab1}
\begin{center}
\begin{tabular}{|c||c|c| c| c |} 
\hline
$\La$ &$(0;1^r)$&$(r; \;)$ &  $(\; ; 1^r)$ & $(\; ; r)$\\
\hline\hline
$s_\La$&$\tilde e_r$&$\ta_{r+1}$ &$e_r$ & $h_r$ \\
      $s_\La^*$&$\tilde e_0 e_r$&$\tilde h_r$&  $e_r$ & $h_r$ \\
     $\bar s_\La$&$\tilde e_r$&$\tilde h_0 h_r$ &  $e_r$ & $h_r$ \\
      $\bar s_\La^*$&$(-1)^r\ta_{r+1}$&$\tilde h_r$&  $e_r$ & $h_r$ \\     
\hline
\end{tabular}
\end{center}
\end{table}

The above expressions for the various one-row or one-column versions of the different super-Schur entail a fairly large number of Pieri rules.  
The list of all Pieri rules is given in Table \ref{tab2}.  These are paired two-by-two by the duality operation.

\medskip

\begin{table}[ht]
\caption{The different Pieri rules}
\label{tab2}
\begin{center}
\begin{tabular}{||c|c||} 
\hline
Pieri rules & $\omega$-dual rules\\
\hline\hline
   $s_\La\,h_r$ & $\bar s_\La^*\, e_r$ \\
     $ s_\La\, \ta_{r}$ & $\bar s_\La^*\,\ta_r$  \\ 
     \hline
$s_\La\,e_r$ & $\bar s_\La^*\,h_r$  \\
$s_\La\,\tilde e_r$ & $\bar s_\La^*\,\tilde h_r$ \\
\hline
   $s_\La^*\,h_r$&$\bar s_\La\, e_r$ \\
     $ s_\La^*\,\tilde h_r$&$\bar s_\La\,\tilde e_r$\\ \hline  
$s_\La^*\,e_r$&$\bar s_\La\,h_r$\\
$s_\La^*\,\tilde e_0e_r$&$\bar s_\La\,\tilde h_0h_r$\\
\hline
\end{tabular}
\end{center}
\end{table}

Looking at Table \ref{tab2},  note that the first four Pieri rules were first conjectured in \cite{BM1}. These together with the following two were then proved in \cite{JL17}. The remaining two Pieri rules have not been considered before. The penultimate one will be needed below. This rule is thus formulated and proved here.  The relevant results pertaining to the Pieri rules are collected in Appendix \ref{PieriApp}.

\medskip

\section{Super Bernstein operators}
\label{SBI}

In this section, we present an explicit construction of the
 four families of
super-Schur functions (see \eqref{les4famss}) in terms of 
super Bernstein operators.  
This is done by introducing the superspace generalization of the generating series  \eqref{Bvertexxdelx} and showing that their (positive) Laurent modes are creation operators for the super-Schurs.  The explicit expressions for the Laurent modes are mainly motivated by the Pieri rules (which are reviewed in Appendix~\ref{PieriApp}), and hence most of the results follow from combinatorial properties of the super-Schurs.

\medskip

We shall use the generating series that we already introduced, together with additional auxiliary generating series,  to define the super vertex operators.  Let $\partial_{\tilde e_r}$ denote the derivative w.r.t.~the odd symmetric elementary function $\tilde e_r$, which can be written as
\be\label{deriv}
\partial_{\tilde{e}_r} = \frac{\partial}{\partial \tilde{e}_r} = (-1)^r \sum_{s\geq 0} h_s \partial_{\ta_{r+s+1}}, \qquad (h_0=1),
\ee
 (see Appendix \ref{ExtraRels}), and let $\partial_{\tilde e_r}^\perp$ denote its adjoint as defined in  equation \eqref{fadjinsupers}.  Set
 \begin{align}
 \mathsf b(z;\eta) =   \bigl(\exp \eta \partial_{\tilde e_0} \bigr) \, \sum_{r\geq0} (-z)^r \tilde{e}_r, \qquad
 \mathsf c(z;\eta)  = \exp\bigl( \eta \sum_{r\geq 0} (-z)^r \partial_{\tilde e_r}^\perp \bigr)
 \end{align}
and also
\be
 \bar{\mathsf c}(z;\eta) =  \bigl(\exp \eta \partial_{\ta_1} \bigr)  \sum_{r \geq 0} z^r \ta_{r+1}.
\ee
Let $V_\pm(z)$ be the formal generating series of the modes $\beta_{\pm n}$ given by
\be \label{VpmetBeta}
V_\pm(z) = \exp \bigl( - \sum_{n>0} \frac{z^{\pm n}}{\pm n} \beta_{\pm n}\bigr) , \qquad 
\beta_n = \sum_{r>0} \ta_{r+n} \partial_{\ta_r}.
\ee
We are now ready to give the four types of vertex operators in superspace.  They will be denoted as
\be\label{les4famsBCBC}
\begin{tabular}{| l || c | c | c | c |} 
\hline
Type: & I & I$^*$ &  II  &  II$^*$  \\
\hline\hline
Bernstein superoperator: &  $B(z;\eta)$   &  $C(z;\eta)$     &   $\bar{B}(z;\eta)$ & $ \bar{C}(z;\eta)$   \\
\hline
\end{tabular}
\ee
where
\begin{align}
B(z;\eta) & = H(z) \, \mathsf b(z;\eta)   \, E^{\perp}(-z^{-1})  \label{genBB}\\
 C(z;\eta) & = H(z) \, \mathsf c(z;\eta)   \, E^{\perp}(-z^{-1})  \label{SVO_C}  \\ 
 \bar{B}(z;\eta) & = V_+(z) \, H(z;\eta) \, E^{\perp}(-z^{-1}) \label{genbarBBzz} \\ 
\bar{C}(z;\eta) & = H(z) \,   \bar{\mathsf c}(z;\eta) \,   E^{\perp}(-z^{-1}) \, V_-(z) \label{genbarCC}. 
\end{align}
In the following subsections, we will show that the expansion in modes of the generating series \eqref{genBB}-\eqref{genbarCC} define the super-Schur functions.

\subsection{Type-I Bernstein operators} Let $\mathbb Z_2$ denote the set $\{0,1\}$.  Expanding the type I generating series as
\be
B(z;\eta) = \sum_{n\in \mathbb Z, \epsilon  \in \mathbb Z_2} z^n \eta^{1-\epsilon} B_n^{(\epsilon)}  = \sum_{n\in \mathbb Z} z^n \bigl( \eta B_n^{(0)} + B_n^{(1)} \bigr)
\ee
we obtain, by simple manipulations, the following Laurent modes
\be\label{B0vsB1}
B_n^{(1)}= \sum_{r\geq0} (-1)^r \ta_{n+r+1}^{\phantom \perp} \circ e_r^\perp, \qquad \qquad
 B_n^{(0)}  = \partial_{\tilde{e}_0} \circ B_{n}^{(1)},
\ee
defined for all $n\in \mathbb Z$, and where the symbol $\circ$ denotes the composition.  
Equivalently, the {mode} $B_n^{(0)} $ can be given explicitly as
\be\label{Bm0}
B_n^{(0)} = \sum_{r\geq 0}(-1)^r h_{n+r}^{\phantom \perp}\circ e_r^\perp - \sum_{r,s\geq 0} (-1)^{r+s} \ta_{n+r+s+1}^{\phantom \perp}\circ e_r^\perp \circ \partial_{\tilde{e}_{s}}
\ee
(the equivalence between these two expressions for $B^{(0)}_n$ is shown below).
The modes $B_n^{(0)}$ and $B_n^{(1)}$ are called respectively the even and odd super Bernstein operators of type I.  They define an action $B_n^{(\epsilon)}\colon \mathcal A_{\mathbb Q}\rightarrow \mathcal A_{\mathbb Q}$ on the ring of symmetric superpolynomials.  
In the following, we shall mostly focus on the positive modes $(n\geq 0)$.  
The even operator increases the total bosonic degree of a superfunction by $n$ without affecting its fermionic degree, while the odd operator increases its total degree by $n$ and its fermionic degree by one. They satisfy the property:  
\begin{proposition} \label{prop_typeI}
  For a superpartition $\Lambda$, let $\epsilon_i=\epsilon_i(\La)=\La^\circledast_i-\La^*_i\in \mathbb Z_2$, for $i=1,2,\ldots, N$, that is, let $\epsilon_i$  be $0$ for a bosonic part and $1$ for a fermionic part. 
Then, we have
\be\label{BBB1}
s_\La  = B_{\La^*_1}^{(\epsilon_1)} \ldots  B_{\La^*_N}^{(\epsilon_N)} \cdot {1} . 
\ee
In other words, the super-Schur $s_\La$ is obtained by acting  with a string of operators $B_n^{(\epsilon)}$ on the identity. As such,   
it is a supersymmetric analogue of \eqref{SchurasBm}.
\end{proposition}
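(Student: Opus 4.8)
The plan is to prove Proposition~\ref{prop_typeI} by induction on $N$, the number of parts of $\La$, mirroring the combinatorial proof of Theorem~\ref{TheoBBB}. First I would set up the induction: write $\La^* = (\La^*_1, \La^*_2, \ldots, \La^*_N)$ with $\La^*_1 \geq \La^*_2 \geq \cdots$, let $\mu$ be the superpartition obtained from $\La$ by deleting the first part (so that $\mu$ is obtained by removing the cell or circle in the first row of the diagram of $\La$, and is a genuine superpartition whose diagram fits inside $\La$), and assume inductively that $s_\mu = B_{\mu^*_1}^{(\epsilon_2)} \cdots B_{\mu^*_N}^{(\epsilon_N)} \cdot 1$. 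The goal then reduces to showing the single identity
\be\label{keyidentity}
B_{\La^*_1}^{(\epsilon_1)} \, s_\mu = s_\La + (\text{terms that must cancel}),
\ee
and more precisely that $B_{\La^*_1}^{(\epsilon_1)}$ applied to $s_\mu$ produces exactly $s_\La$. One has to be a little careful about the order of parts when $\La$ has repeated bosonic parts or when a fermionic part equals a bosonic part in $\La^+$; I would fix the convention (as in the statement, reading $\La^*$ in nonincreasing order with fermionic parts placed before equal bosonic ones, consistent with $\epsilon_i = \La^\circledast_i - \La^*_i$) and check that the induction step respects it.

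The core of the argument is the analysis of the action of $B_n^{(\epsilon)}$ on an arbitrary super-Schur $s_\mu$ using the explicit mode expressions \eqref{B0vsB1}, \eqref{Bm0}. For the odd mode $B_n^{(1)} = \sum_{r\geq 0}(-1)^r \ta_{n+r+1} \circ e_r^\perp$, I would first apply the inverse Pieri rule: $e_r^\perp s_\mu = \sum_\nu s_\nu$ where $\mu/\nu$ is a vertical $r$-strip (here I need the super-analogue of the inverse Pieri rule for $s_\La$, which is the adjoint of the rule $s_\La e_r$ in Table~\ref{tab2}, available from Appendix~\ref{PieriApp}), and then multiply each $s_\nu$ by $\ta_{n+r+1}$, which by the Pieri rule $s_\nu \ta_{n+r+1}$ (second rule in Table~\ref{tab2}) adds a fermionic row of appropriate size. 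The combinatorial heart is then a sign-reversing / telescoping argument: all the intermediate superpartitions $\nu$ obtained by removing a nonempty vertical strip and then adding back a fermionic row must cancel in pairs across consecutive values of $r$, leaving only the $r=0$ contribution $\ta_{n+1} s_\mu$, whose Pieri expansion must in turn collapse to the single term $s_\La$ — this is where the hypothesis $\La^*_1 \geq \La^*_2$ (first part largest) is essential, exactly as in the classical case where $h_{\la_1} s_{(\la_2,\ldots)} - h_{\la_1+1}(\cdots)$ telescopes. For the even mode $B_n^{(0)}$ I would use the relation $B_n^{(0)} = \partial_{\tilde e_0} \circ B_n^{(1)}$ together with \eqref{deriv}, or directly the expression \eqref{Bm0}; the second (double) sum in \eqref{Bm0} is precisely the correction that kills the unwanted fermionic contributions, and I would show it does so via a cancellation paralleling the odd case, again reducing to $h_n s_\mu$ minus corrections that telescope to $s_\La$.

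I expect the main obstacle to be the bookkeeping of the cancellations in the fermionic sector — specifically, showing that the sum over $r$ (and, for the even mode, over $r$ and $s$) of the doubly-indexed family of superpartitions obtained by "remove a vertical $r$-strip, then add/adjust a fermionic row/column" organizes into cancelling pairs with the correct signs $(-1)^r$ (resp.\ $(-1)^{r+s}$), uniformly over all $\mu$. In the classical proof this is the lattice-path / Lindström–Gessel–Viennot style involution on strips; here the presence of the distinguished fermionic row, the circles in the diagram, and the two different super-Pieri rules that intertwine make the involution more delicate, and one must verify it is compatible with the ordering convention on the parts of $\La$. A secondary technical point is establishing the equivalence of the two expressions \eqref{Bm0} and $B_n^{(0)} = \partial_{\tilde e_0}\circ B_n^{(1)}$, which I would handle separately as a short lemma using \eqref{deriv} and the fact that $\partial_{\tilde e_0}$ commutes appropriately past $e_r^\perp$ and acts on $\ta_{n+r+1}$ by producing $h$-terms. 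Since this proposition parallels the one whose proof is deferred to Appendix~\ref{Sproof_typeI}, I anticipate the type-I argument here will be the template, with the starred and barred cases following by the same scheme modulo the automorphism relations \eqref{sesbeomsign1} and the dual Pieri rules.
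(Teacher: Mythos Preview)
Your overall strategy matches the paper's: induct on $N$, reduce to showing $B_n^{(\epsilon)} s_\mu = s_\La$ for $n=\La_1^*$ via the Pieri rules and a sign-reversing involution, and handle the even mode through $B_n^{(0)}=\partial_{\tilde e_0}\circ B_n^{(1)}$. But two concrete points in your description would fail as written.

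First, the cancellation for $B_n^{(1)}$ is mischaracterized. It is \emph{not} the case that the $r\geq 1$ contributions cancel among themselves leaving the full $r=0$ term $\ta_{n+1}s_\mu$; that product already has many terms in its Pieri expansion (e.g.\ $\ta_4\,s_{(0;3)}$ has six), and there is nothing left for them to ``collapse'' against. The correct mechanism is an involution on the set of \emph{all} decorated diagrams arising from $\ta_{n+r+1}\circ e_r^\perp\, s_\mu$ over all $r\geq 0$: one identifies in each diagram a distinguished ``transmutable'' cell, and toggling its status (untouched $\leftrightarrow$ removed-then-readded) pairs it with a diagram of the same shape in the adjacent $r$-sector with opposite sign. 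The unique fixed point lives in the $r=0$ sector and is the single diagram in which a new box lands in every column of $\mu$ with the new circle at the end of the first row; the hypothesis $n\geq \mu_1^\circledast$ is what guarantees this fixed point exists. Note also that the inverse rule $e_r^\perp s_\mu$ is, by duality, governed by the Pieri rule for $e_r\,s^*_\Omega$ (not $e_r\,s_\Omega$); this is exactly the new rule proved in Appendix~\ref{PieriApp}.

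Second, for $\epsilon=0$, reducing via $\partial_{\tilde e_0}\circ B_n^{(1)}$ is indeed the clean route, but it only finishes once you know how $\partial_{\tilde e_0}$ acts on an arbitrary $s_\Delta$: you need the nontrivial fact that $\partial_{\tilde e_0}s_\Delta$ vanishes when the first row of $\Delta$ is bosonic and otherwise simply removes the circle from the first row. This is Lemma~\ref{Lemlemlemdeleo} and requires its own inductive proof via the $\ta_r$-Pieri rule; it does not follow routinely from \eqref{deriv}. Your alternative of attacking the double sum in \eqref{Bm0} directly is substantially harder and is not how the paper proceeds.
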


The proof of Proposition \ref{prop_typeI} is relegated to Appendix \ref{Sproof_typeI}. We illustrate this result with a few examples. 

\begin{example} For the superpartition $(4,1;3,2,2)$, the associated Schur function is given by 
   $$
   s_{(4,1;3,2,2)}=B_4^{(1)}B_3^{(0)}B_2^{(0)}B_2^{(0)}B_1^{(1)}\cdot 1$$
   since $(4,1;3,2,2)^*=(4,3,2,2,1)$, the parts $4$ and $1$ are fermionic (associated to the odd operators) while the parts $3,2,2$ are bosonic (associated to the even operators).  
\end{example}

\begin{example}
We will verify that  $B_3^{(1)} s_{(0;3)} = s_{(3,0;3)}$. 
 In this case, we will not use the inverse Pieri rule of the type $e_r^\perp s_\La$ but use instead  the explicit expression of $e_r^\perp$ in terms of derivatives with respect to the variables $x_n$. It can be shown that the expansion of $s_{(0;3)}$ in the
 powers-sum basis is given by
\[
 s_{(0;3)}=-\ta_4+\ta_1 x_3+\ta_1x_1x_2+\frac16 \ta_1x_1^3,
 \]
For the action of  
 $e_r^\perp$ we use the generating series 
\eqref{gen_serE_perp} to obtain
  \[ e_1^\perp=\d_{x_1},\qquad e_2^\perp=\frac12\d^2_{x_1}-\frac12 \d_{x_2},\qquad e_3^\perp=\frac13 \d_{x_3}-\frac12\d_{x_1}\d_{x_2}+\frac16 \d_{x_1}^3,\qquad \text{etc.}
  \]
which implies that
\[ 
e_3^\perp s_{(0;3)}=0,\qquad  e_2^\perp s_{(0;3)}=0,\qquad  e_1^\perp s_{(0;3)}=\ta_1x_2+\frac12\ta_1x_1^2=s_{(2;0)}+s_{(0;2)}
\]
where it can be checked that indeed $\ta_1x_2+\ta_1x_1^2/2=s_{(2;0)}+s_{(0;2)}$. Hence, we get (since $e_0^\perp=1$)
\[
 B_3^{(1)} s_{(0;3)}=\ta_4  s_{(0;3)}-\ta_5 ( s_{(2;)}+ s_{(0;2)}).
 \]
To complete the computation, we use the Pieri rule of type $\ta_r s_\La$, given in Appendix \ref{PieriApp}, {which produces}
\[\begin{split}
\ta_4 \, s_{(0;3)}&=s_{(6,0;)}+s_{(5,1;)}+s_{(4,2;)}+s_{(5,0;1)}+s_{(4,0;2)}+s_{(3,0;3)}\\
\ta_5 \, s_{(0;2)}&=s_{(6,0;)}+s_{(5,1;)}+s_{(5,0;1)}+s_{(4,0;2)}\\
\ta_5 \, s_{(2;)}&=s_{(4,2;)}\end{split}\]
Summing all the contributions,  we find the expected result.  
\end{example}

\medskip

\begin{note} \label{NoteBn0zzz}
We end this section by showing how to obtain \eqref{Bm0} from   \eqref{B0vsB1} for the mode $B_n^{(0)}$.   This follows from 
applying directly the derivative $\partial_{\tilde{e}_0}$ in \eqref{B0vsB1} on $B_n^{(1)}$: using 
the expression \eqref{deriv}, we obtain
\be\label{db0}
\partial_{\tilde{e}_0}\circ B_n^{(1)}
= \sum_{r\geq 0} (-1)^r [ {h}_{n+r}^{\phantom \perp} \circ e^\perp_r -
 { \ta}_{n+r+1}^{\phantom \perp} \circ  \partial_{\tilde{e}_0} \circ e_r^\perp].
\ee 
To prove the equivalence between the above expression  and \eqref{Bm0}, {we need to show the following formula}
\be \label{theRtemp1}
  \sum_{r,s\geq 0} (-1)^{r+s} \ta_{n+r+s+1}^{\phantom \perp}\circ e_r^\perp \circ \partial_{\tilde{e}_{s}}= \sum_{r\geq 0} (-1)^r { \ta}_{n+r+1}^{\phantom \perp} \circ  \partial_{\tilde{e}_0} \circ e_r^\perp.
 \ee
 {Using again \eqref{deriv}, the left-hand side of \eqref{theRtemp1} is}
 \be \label{theRtemp2}
 \sum_{r,s,t\geq 0} (-1)^r \ta_{n+r+s+1} \circ e_{r}^\perp \circ h_t\circ \partial_{\ta_{s+t+1}}.
\ee 
Then, we make use of the exchange relation 
 \[ e_{r}^\perp \circ h_t=   h_t \circ e_{r}^\perp+h_{t-1} \circ e_{r-1}^\perp
 \]
 and the fact that $e_r^\perp \circ \partial_{\ta} =  \partial_{\ta} \circ e_r^\perp$,
to write \eqref{theRtemp2} as
\[ \begin{split}
\sum_{r,s,t\geq 0} (-1)^r \ta_{n+r+s+1}^{\phantom \perp} \circ [h_t \circ e_{r}^\perp+h_{t-1} \circ e_{r-1}^\perp
]\circ \partial_{\ta_{s+t+1}}=
\sum_{r,s,t\geq 0} (-1)^r h_t\, [\ta_{n+r+s+1} \partial_{\ta_{s+t+1}} -  \ta_{n+r+s+2} \partial_{\ta_{s+t+2}}] \circ e_r^\perp,
\end{split} \]
where we have shifted the indices in the second term and used $h_{-1}=e_{-1}^\perp=0$.
The sum over $s$ can be performed explicitly (it is a telescopic series), and only the term $s=0$ in the first sum  remains.  {Hence, the expression  \eqref{theRtemp2} results into}
\[
\sum_{r,t\geq 0 } (-1)^r h_t \ta_{n+r+1}\partial_{\ta_{t+1}} \circ e_r^\perp =
\sum_{r\geq 0 } (-1)^r \ta_{n+r+1}\l(\sum_{t\geq 0 }h_t\,\partial_{\ta_{t+1}} \r)\circ e_r^\perp= \sum_{r \geq 0} (-1)^r \ta_{n+r+1}^{\phantom \perp} \circ \partial_{\tilde{e}_0} \circ e_r^\perp
\]
where we used again \eqref{deriv}, this time with $s=0$. This completes the 
desired verification.
\end{note}

\subsection{Type-I$^*$ Bernstein operators}
We now present the creation operators for the Schur superfunctions of type I$^*$,
namely for the $s^*_\La$'s.   {Referring to \eqref{les4famsBCBC} and \eqref{SVO_C}, and expanding this time the generating series as
\be
C(z;\eta) = 
\sum_{n\in \mathbb Z, \epsilon  \in \mathbb Z_2} z^n \eta^{\epsilon} C_n^{(\epsilon)} 
\ee
we obtain the Laurent modes
\be\label{eqdefCm0}
C_n^{(0)}  = \sum_{r\geq 0} (-1)^r h_{n+r}^{\phantom \perp} \circ e_r^\perp
\ee
and
\be  
\label{eqdefCm1}
C_n^{(1)}=\partial_{\tilde{e}_0}^\perp  \circ C_n^{(0)} =  \sum_{r,s,t \geq 0}  (-1)^r h_{n+r-t}^{\phantom \perp} \circ \ta_{s+t+1}^{\phantom \perp} \circ h_s^\perp \circ e_r^\perp
\ee
(defined for all $n\in \mathbb Z$).  This is seen as follow.  We have
\be
C(z;\eta)= H(z) \, \bigl( 1 + \eta  \sum_{s\geq 0} (-1)^s \partial_{\tilde e_s}^\perp \, z^s \bigr) \, E^\perp(-z^{-1})
\ee
and hence the term which is independent of the Grassmman variable $\eta$ is the usual Bernstein operator (denoted $C_n^{(0)}$, as in \eqref{eqdefCm0}).     Then the coefficient of $\eta$, using again formula \eqref{deriv}, is
\be
\partial_\eta C(z;\eta) = \sum_{k,r,s\geq 0} z^{k+s} \ta_{r+s+1}^{\phantom \perp} h_k^{\phantom \perp} \circ h_r^\perp \circ E^\perp(-z^{-1}) .
\ee
If we change the summation indices by $k+s=i\geq 0, s=j\geq0$, we obtain
\be
\partial_\eta C(z;\eta) = \sum_{i,j,r\geq 0} z^i \ta_{j+r+1} h_{i-j} \circ h_r^\perp \circ E^\perp(-z^{-1}),
\ee
and then followed by $j+r=p\geq 0, j=q\geq 0$:
\be
\partial_\eta C(z;\eta)  = \sum_{i,p,q\geq 0} z^i \ta_{p+1} h_{i-q} \circ h^\perp_{p-q} \circ E^\perp(-z^{-1}).
\ee
Performing finally the sum over $q$ using the first identity in \eqref{hhperp}, we have the desired expression
\be
\partial_\eta C(z;\eta)= \sum_{i,p} z^i \ta_{p+1} h^\perp_p \circ h_i \circ E^\perp(-z^{-1}) = \partial_{\tilde e_0}^\perp \circ H(z) \circ E^\perp(-z^{-1}) = \sum_{n\in \mathbb Z} z^n \, \bigl (  \partial_{\tilde e_0}^\perp \circ  C_n^{(0)} \bigr).  
\ee
The explicit formula for \eqref{eqdefCm1} can easily be obtained using manipulations similar to those found in Note~\ref{NoteBn0zzz}.  We thus omit the details of the computation.

\medskip

{
The operators $C_n^{(\epsilon)} \colon \mathcal A_{\mathbb Q} \rightarrow \mathcal{A}_{\mathbb Q}$, for $\epsilon=0,1$, are respectively called the even and odd Bernstein operators of type I$^*$.  In particular, they have the following action.
}

\medskip

\begin{proposition} \label{prop_typeIe}
{Let $\La$ be a superpartition with $\La^*=(\La_1^*, \ldots, \La_N^*)$, and let $\epsilon_i=\epsilon_i(\La)$ be defined as in Proposition \ref{prop_typeI}.  We have
\be\label{CCC1}
s_\La^*  = C_{\La^*_1}^{(\epsilon_1)} \ldots  C_{\La^*_N}^{(\epsilon_N)}\cdot {1} .
\ee
}
\end{proposition}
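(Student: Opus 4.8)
The strategy is to prove \eqref{CCC1} by induction on $N$, the number of parts of $\La^*$, exactly paralleling the (combinatorial) proof of Proposition~\ref{prop_typeI} carried out in Appendix~\ref{Sproof_typeI}; indeed, since $C_n^{(\epsilon)}$ is built from $H(z)$, $E^\perp(-z^{-1})$ and the extra factor $\mathsf c(z;\eta)$ in the same way $B(z;\eta)$ is built from those and $\mathsf b(z;\eta)$, one expects the two arguments to be essentially mirror images of each other. Write $\Omega$ for the superpartition obtained from $\La$ by deleting its first part $\La_1^*$ (of parity $\epsilon_1$); by induction $s_\Omega^* = C_{\La_2^*}^{(\epsilon_2)}\cdots C_{\La_N^*}^{(\epsilon_N)}\cdot 1$, so it remains to show that $C_{\La_1^*}^{(\epsilon_1)} s_\Omega^* = s_\La^*$. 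The base case $N=0$ is $C$ acting on $1$, which from \eqref{eqdefCm0}--\eqref{eqdefCm1} gives the one-row super-Schurs $s_{(0;r)}^* = \tilde e_0 e_r$ and $s_{(r;\,)}^* = \tilde h_r$ listed in Table~\ref{tab1} (note $e_r^\perp\cdot 1 = \delta_{r,0}$, $h_s^\perp\cdot 1 = \delta_{s,0}$, so only $r=s=0$, $t$ free survives in \eqref{eqdefCm1}).

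\textbf{Key steps.} First, I would split into the two cases $\epsilon_1=0$ and $\epsilon_1=1$. For $\epsilon_1=0$: here $C_{\La_1^*}^{(0)} = \sum_{r\ge 0}(-1)^r h_{\La_1^*+r}\circ e_r^\perp$ is literally the classical Bernstein operator acting in superspace, so $C_{\La_1^*}^{(0)} s_\Omega^* = \sum_r (-1)^r h_{\La_1^*+r}\,(e_r^\perp s_\Omega^*)$. Now expand $e_r^\perp s_\Omega^*$ using the inverse Pieri rule $s_\La^* e_r$ of Table~\ref{tab2} (i.e.\ the adjoint Pieri rule for $s^*$ of type $e_r$), which removes a vertical $r$-strip, then re-multiply by $h_{\La_1^*+r}$ via the Pieri rule $s_\La^* h_r$ which adds a horizontal strip. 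The standard lattice-word / sign-reversing-involution cancellation (the same one that proves Theorem~\ref{TheoBBB} combinatorially, specialized appropriately in Section~\ref{Sproof_typeIe}) should collapse the double sum to the single term $s_\La^*$, the point being that $\La_1^* \ge \Omega_1^*$ guarantees the new row genuinely sits on top. For $\epsilon_1=1$: use the identity $C_n^{(1)} = \partial_{\tilde e_0}^\perp\circ C_n^{(0)}$ from \eqref{eqdefCm1}. Since $\La_1^*$ fermionic means $\La$ is obtained from the superpartition $\Lambda'$ (with $\La'^* = (\La_1^*,\Omega_1^*,\ldots)$ but $\La_1^*$ \emph{bosonic} in a suitable auxiliary sense) — more precisely, I would show $C_{\La_1^*}^{(0)} s_\Omega^*$ equals a sum of $s_\Theta^*$ over superpartitions $\Theta$ with an added bosonic row of length $\La_1^*$, and then that $\partial_{\tilde e_0}^\perp$ converts exactly the term with that row as a new fermionic part of size $\La_1^*$ into $s_\La^*$ while annihilating the rest. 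This last fact is where one must understand how $\partial_{\tilde e_0}^\perp = \sum_{t\ge 0} h_t^\perp\,\partial_{\ta_{t+1}}^{\phantom\perp}$ (the adjoint of $\partial_{\tilde e_0}=\sum_t h_t\,\partial_{\ta_{t+1}}$ from \eqref{deriv}) acts on super-Schurs $s_\Theta^*$; invoking the duality \eqref{sesbeomsign1}, $s_\Theta^* = \omega(\bar s_{\Theta'})$, this translates into a statement about $\bar s$ and the operator $\omega(\partial_{\tilde e_0}^\perp)$, which one can analyze via the Pieri rules for $\bar s_\La \tilde e_r$ / $\bar s_\La\tilde h_r$ in Table~\ref{tab2}.

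\textbf{Alternative route via duality.} A cleaner organization might avoid the fermionic case entirely: by \eqref{sesbeomsign1} we have $s_\La^* = \omega(\bar s_{\La'})$, and if one first establishes (as the paper presumably does for type II, or can establish by the same method) that $\bar s_{\La'} = \bar C_{(\La')^*_1}^{(\epsilon_1')}\cdots \cdot 1$ where $\bar C = \omega^{-1}\circ C\circ \omega$ conjugated appropriately, then \eqref{CCC1} for $s^*$ follows by applying $\omega$ throughout and checking that $\omega$ intertwines the relevant generating series: $\omega(H(z)) = E(z)$-type relations from \eqref{dualEH}, and the conjugation exchanges $h_s^\perp \leftrightarrow e_s^\perp$, $\partial_{\ta} \leftrightarrow \partial_{\ta}$ up to signs. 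I would present whichever of these is shorter; given that the excerpt says ``manipulations similar to those found in Note~\ref{NoteBn0zzz}'' and ``we omit the details,'' the authors clearly favor the direct Pieri-rule computation.

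\textbf{Main obstacle.} The hard part will be the fermionic case $\epsilon_1=1$: pinning down precisely how $\partial_{\tilde e_0}^\perp$ selects the correct term — i.e.\ proving that among all the horizontal-strip additions produced by $C_{\La_1^*}^{(0)} s_\Omega^*$, applying $\partial_{\tilde e_0}^\perp$ yields $s_\La^*$ with the right coefficient $1$ (and not, say, a sum with multiplicities or extra signs $(-1)^r$). This requires a clean combinatorial description of $\partial_{\tilde e_0}^\perp$ on the $s^*$-basis, which is exactly the ingredient not spelled out in the excerpt; everything else (the $\epsilon_1=0$ case, the base case, the telescoping in intermediate identities) is a routine adaptation of the classical Bernstein-operator argument and of Note~\ref{NoteBn0zzz}.
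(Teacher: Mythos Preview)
Your overall approach---induction on $N$, handling the bosonic case $\epsilon_1=0$ via the Pieri rules and a sign-reversing involution, then deducing the fermionic case $\epsilon_1=1$ from it via $C_n^{(1)}=\partial_{\tilde e_0}^\perp\circ C_n^{(0)}$---is exactly the paper's. Two corrections are in order, however.

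First, a minor slip in your base case: $C_r^{(0)}\cdot 1 = h_r = s^*_{(\,;r)}$ (not $\tilde e_0 e_r$, which is $s^*_{(0;1^r)}$, a one-\emph{column} object), and $C_r^{(1)}\cdot 1 = \tilde h_r = s^*_{(r;\,)}$, as you correctly state.

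Second, and more substantively, your handling of the fermionic step is more tangled than necessary. Once the $\epsilon=0$ case is established, $C_{\La_1^*}^{(0)} s_\Omega^*$ is a \emph{single} super-Schur $s_\Delta^*$ (with a new bosonic first row of length $\La_1^*$), not a sum of $s_\Theta^*$'s; so there is nothing to ``annihilate'' and the fermionic case reduces entirely to computing $\partial_{\tilde e_0}^\perp s_\Delta^*$. The paper does this not by your proposed detour through $\omega$ and $\bar s$, but directly via the $s$--$s^*$ duality \eqref{orthops_ssesbsbe}: Lemma~\ref{Lemlemlemdeleo} shows that $\partial_{\tilde e_0}$ acting on $s_\Gamma$ removes the circle in the first row (or gives $0$ if there is none), and then by taking adjoints (Corollary~\ref{corde0tiladj_1}) $\partial_{\tilde e_0}^\perp$ on $s_\Delta^*$ \emph{adds} a circle to the first row when that row is bosonic. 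Since $\Delta$ has bosonic first row, $\partial_{\tilde e_0}^\perp s_\Delta^* = s_\La^*$ immediately. Your ``alternative route via duality'' to type~II is likewise not the paper's strategy; in fact the paper proceeds in the opposite direction, deriving types~II and~II$^*$ from~I and~I$^*$ via the automorphism $\varphi$ (Lemma~\ref{LemmbarBC}).
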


The proof of this statement will be presented in the following subsection.  

\medskip

\begin{remark}\label{RemCn1alter1}
An alternative expression for the odd Bernstein operator \eqref{eqdefCm1} is given by
\be
C_n^{(1)}=
\sum_{r,s \geq 0} (-1)^r \Bigl( \tilde{h}_{n+r+s}^{\phantom \perp} - \sum_{t=0}^{s-1} \ta_{t+1}^{\phantom \perp} h_{n+r+s-t}^{\phantom \perp}\Bigr)  \circ h^\perp_s\circ e^\perp_r .
\ee
\end{remark}

\begin{example} Consider the super-Schur $s^*_{(2;1)}$.  From Proposition \ref{prop_typeIe}, we have  $s^*_{(2;1)}=C_2^{(1)} C_1^{(0)}\cdot 1$. Since $C_1^{(0)}\cdot 1=s_{(1)}$, we then use the formula presented in Remark \ref{RemCn1alter1}, to compute the remaining action of $C_2^{(1)}$, and note that only the cases with $r=0$ and $r=1$ are non-zero.  This gives
  \[
C_2^{(1)}s_{(1)} = \sum_{s=0}^1 \Bigl(\tilde{h}_{2+s} -\sum_{t=0}^{s-1}\ta_{t+1}h_{2+s-t}\Bigr) h_s^\perp s_{(1)} - \tilde{h}_3 = \tilde{h}_2 s_{(1)}- \ta_1 h_3
\]
which is indeed equal to $s^*_{(2;1)}= \tilde{h}_2h_1 - \tilde{h}_0 h_3$.
\end{example}

\subsection{Proof of Proposition \ref{prop_typeIe}} 
\label{Sproof_typeIe}
In this section, we present a combinatorial proof that 
the Bernstein operators of type I$^*$ satisfy \eqref{CCC1}.   The proof of Proposition \ref{prop_typeI} being very similar, is relegated to Appendix~\ref{Sproof_typeI}. 
 
 \medskip

First, observe that if there are only bosonic parts then Proposition~\ref{prop_typeIe} is trivially verified because the expression
\be
C_{k_1}^{(0)}\ldots C_{k_n}^{(0)} \cdot 1
\ee
gives the usual Schur function $s_\la$ with $\la=(k_1, \ldots, k_n)$ given that the operator $C_k^{(0)}$ is the usual Bernstein operator $B_k$.  

\medskip

It is also straightforward  to verify the proposition in the case where there is  a single fermionic part,
\be
C_k^{(1)}\cdot 1 = \partial_{\tilde{e}_0}^\perp \circ C_k^{(0)} \cdot 1
=\partial_{\tilde{e}_0}^\perp \circ h_k 
 = \sum_{r\geq 0} \ta_{r+1}^{\phantom \perp} h_r^\perp \, h_k^{\phantom \perp} 
 = \sum_{r\geq 0} \ta_{r+1} h_{k-r} 
 = \tilde{h}_k = s^*_{(k;)}.
\ee

The proof of the proposition boils down to proving that, when $n$ is large enough, the action of $C_n^{(\epsilon)}$ on $s^*_\La$ creates a single $s_\Om^*$ where the diagram of $\Om$ is obtained by the addition of a row of length $n$ on the top of the diagram of $\La$.  We first consider the case $\epsilon=0$, that is, we show that if $n \geq \Lambda_1^{\circledast}$ then
\be\label{lemtypeIea}
C_n^{(0)} s^*_\La = s^*_\Om \qquad \text{with} \qquad \Om^*=(n,\La^*)\quad\text{and}\quad \Om^\circledast=(n,\La^\circledast).
\ee

\begin{proof}
Let $\La$ be a superpartition, and let assume that $n\geq \La^\circledast_1$.    
We show that the action of $C_n^{(0)}$ on $s_\La^*$ only produces one Schur function, i.e.
\be\label{cmo_onsetexp1}
C_n^{(0)} s_\La^* = \sum_{r\geq 0}(-1)^r h_{n+r}^{\phantom \perp} \circ e_r^\perp \, s^*_\La =  s^*_{(\La_1^{\rm a}, \ldots, \La_{\rm m}^{\rm a} ;\, n, \La^{\rm s}_1, \ldots)}
\ee
using the Pieri rules.  For that, we will associate a diagram to each super-Schur
functions appearing in $h_{n+r}^{\phantom \perp} \circ e_r^\perp \, s^*_\La$. We will then show that when summing over $r$, all diagrams cancel out two-by-two except an extra one corresponding to $s^*_{(\La_1^{\rm a}, \ldots, \La_{\rm m}^{\rm a} ;\, n, \La^{\rm s}_1, \ldots)}$. To prove that this is indeed the case, we will introduce
a sign-reversing  involution, to be denoted $\mathcal{J}$,  that  pairs each diagram  {-- beside the desired one --} with another of the same shape  but with an  opposite sign.   

\medskip

We begin by  
introducing a pictorial characterization of each diagram appearing in \eqref{cmo_onsetexp1}.  Given a diagram $\La$, the set of {\it removable} boxes of $\La$ forms a $\ell(\La^*)$-vertical strip (where $\ell(\La^*)$ is the length of $\La^*$) and corresponds to all  boxes located at the rightmost position
of each row of $\La^*$.  
They are indicated  by the symbol $\circ$ in the following
 example:
\be \label{ex429}
\scalebox{1.1}{$\superY{
	\,&\,&\,&\circ& \yF{}\\
	\,&\,&\,&\circ \\
	\,&\,&\circ \\
	\,&\circ \\
	\circ &\yF{}\\
	\circ\\
	\circ\\
	\yF{}
	}$}
\ee
The action of $e_r^\perp$ on $s_\La^*$ produces Schur functions (of type I$^*$) given by every diagram that can be obtained from the diagram $\La$ by removing an $r$-vertical strip using the (inverse) Pieri rule of type I.  Thus the set of removable boxes of $\La$ corresponds to the  
boxes that can be removed when  $e_r^\perp$ acts.  
In our pictorial representation, following the action of $e_r^\perp$, instead of removing $r$ boxes from the original diagram $\La$, we leave them in place but mark them with a $\times$.     
By definition of an $r$-vertical strip, only the removable boxes of $\La$ (those marked with a $\circ$ in \eqref{ex429}) can be marked with a $\times$.  
Here is an example
\be
e_3^\perp \cdot
\scalebox{1.1}{$\superY{
\,&\,&\,&\,&\yF{}\\
\,&\,&\,&\, \\
\,&\,&\, \\
\,&\, \\
\,&\yF{}\\
\,\\
\,\\
\yF{}
}$}
\supset
\scalebox{1.1}{$\superY{
\,&\,&\,&\,&\yF{}\\
\,&\,&\,&\times \\
\,&\,&\times \\
\,&\sFill{\times} \\
\,\\
\,\\
\,\\
\yF{}
}$}
\ee
Note that in the above diagram the box with a circle and a $\times$ means that the box originally in this position was removed and the circle in the row just below took its position. 
With this notation, every diagram obtained from  $e_r^\perp s^*_\La$
  is uniquely  characterized.

\medskip

Next, the action of $h_{r}$ on $s_\La^*$ produces Schur functions associated to every diagrams that can be obtained from $\La$ by adding an $r$-horizontal strip  using the Pieri rule of type I$^*$.  Those boxes added to $\La$ are marked with the symbol $+$.
For instance, we have
\be
h_{5} \cdot
\scalebox{1.1}{$\superY{
\,&\,&\,&\,&\yF{}\\
\,&\,&\,&\, \\
\,&\,&\, \\
\,&\, \\
\,&\yF{}\\
\,\\
\,\\
\yF{}
}$}
\supset
\scalebox{1.1}{$\superY{
\,&\,&\,&\,&+ &+&+\\
\,&\,&\,&\,&\yF{!} \\
\,&\,&\, \\
\,&\,&+ \\
\,&\yF{}\\
\,\\
\,\\
+\\
\yF{!}
}$}
\ee
Recall that, whenever a box is added to a row which ends with a circle, the circle moves to the next row below (see the above example).   A circle that was moved from its original position in the diagram $\La$ due to the addition of a box will be marked with the symbol ``$!$''.
With this notation, every diagram produced in the expansion $h_r \, s^*_\La$ is
again uniquely characterized.

\medskip

With the notation introduced so far, we can thus represent every super-Schur function appearing in $h_{n+r}^{\phantom \perp }\circ e_r^\perp \, s^*_\La$, where $r=0,1,2,\ldots$, 
by a suitably decorated diagram.  In addition to the boxes marked with the symbols $\times$ and $+$, there are also boxes which are removed in the first step ($\times$) and then replaced in the second step  ($+$).
We will  identify these boxes with  the symbol $\PlusX$. 
      Note that the set of boxes marked with a $\PlusX$ are part of the original diagram; in particular, {they belong to} the set of removable boxes.  
        {Here is an} example,
\be \label{eq432}
h_{8}^{\phantom \perp}\circ e_3^\perp \cdot
	\scalebox{1.1}{$\superY{
	\,&\,&\,&\,&\yF{}\\
	\,&\,&\,&\, \\
	\,&\,&\, \\
	\,&\, \\
	\,&\yF{}\\
	\,\\
	\,\\
	\yF{}
	}$}
\supset
	\scalebox{1.1}{$\superY{
	\,&\,&\,&\,&+&+&+&+&+\\
	\,&\,&\,&\sFill{\times} \\
	\,&\,&\PlusX \\
	\,&\PlusX \\
	\,& \yF{}\\
	\,\\
	\,\\
	+ \\
	\yF{!}
	}$}
\ee
Boxes that are not decorated are simply referred to as empty boxes.  

\medskip

We now introduce the notion of a {\it transmutable} box. It is either an empty box, in which case it will be marked with a $\bullet$, or a box with a $\PlusX$, in which case it will be marked with a $\PlusXb$.  
Reading the diagram from top right  to bottom left, a transmutable box is the first box that satisfies the following conditions:
\begin{itemize}
\item[1.] it is an empty box or a $\PlusX$box; 
\item[2.] it has either no box or only $+$boxes to its right;
\item[3.] it is not above a $+$box or an empty box;
\item[4.]  there is no unmarked (empty) circle to its right.
\end{itemize}
The first two conditions correspond to the fact that a transmutable box can only be  
a removable box of the original diagram. 
By definition, each diagram appearing in $h_{n+r}^{\phantom \perp}\circ e_r^\perp \, s^*_\La$ can have at most one transmutable box.   In the diagram
given in \eqref{eq432}, the transmutable box is the fourth one in the first row, that is, the one marked with a $\bullet$ in the diagram that follows 
\be\label{ex_first_tbbox1}
\scalebox{1.1}{$\superY{
	\,&\,&\,&\bullet&+&+&+&+&+\\
	\,&\,&\,&\sFill{\times} \\
	\,&\,&\PlusX \\
	\,&\PlusX \\
	\,& \yF{}\\
	\,\\
	\,\\
	+ \\
	\yF{!}
	}$}
\ee

\medskip

The involution $\mathcal{J}$ interchanges the transmutable box marked with a $\bullet$ with that marked with a $\PlusXb$ and leaves the diagram unchanged if there is no transmutable box.
  The conditions that define a transmutable box then guarantee that this operation is well defined and is an involution.
When a box marked with a $\bullet$ is interchanged with one marked with a $\PlusXb$, the number of $\times$ changes by one, which is reflected in \eqref{cmo_onsetexp1} by a change of sign.  For instance,
acting with the involution $\mathcal{J}$ on the diagram appearing in \eqref{ex_first_tbbox1}, we obtain
\be\label{ex_first_tbb2}
\scalebox{1.1}{$\superY{
	\,&\,&\,&\PlusX&+&+&+&+&+\\
	\,&\,&\,&\sFill{\times} \\
	\,&\,&\PlusX \\
	\,&\PlusX \\
	\,& \yF{}\\
	\,\\
	\,\\
	+ \\
	\yF{!}
	}$}
\subset
h_{9}^{\phantom \perp}\circ e_4^\perp \cdot
	\scalebox{1.1}{$\superY{
	\,&\,&\,&\,&\yF{}\\
	\,&\,&\,&\, \\
	\,&\,&\, \\
	\,&\, \\
	\,&\yF{}\\
	\,\\
	\,\\
	\yF{}
	}$}
        \ee
     Applying   $\mathcal{J}$  on the previous diagram then obviously
produces the the diagram appearing in \eqref{ex_first_tbbox1}.  

\medskip

There is a unique diagram starting from $\La$ which contains no transmutable box.  
This is 
the diagram in which no box has been removed and a box ($+$) has been added in
the first $n$ column.  
It thus comes from the $r=0$ sector of $h_{n+r}^{\phantom \perp}\circ e_r^\perp \, s^*_\La$ (since $e_0^\perp=1$) and because $n\geq \La_1^*$, such a diagram always exists and is unique. With every column ending with a $+$box,  the point (3.) above rules out the possibility of a transmutable box. 
This diagram is thus the only one mapped to itself under $\mathcal{J}$ and, as such, the only diagram  not cancelled out.  From our example, this left-over diagram is
\be
h_{5}^{\phantom \perp}\circ e_0^\perp \cdot
	\scalebox{1.1}{$\superY{
	\,&\,&\,&\,&\yF{}\\
	\,&\,&\,&\, \\
	\,&\,&\, \\
	\,&\, \\
	\,&\yF{}\\
	\,\\
	\,\\
	\yF{}
	}$}
\supset
	\scalebox{1.1}{$\superY{
	\,&\,&\,&\,&+\\
	\,&\,&\,& & \yF{!} \\
	\,&\,&\,&+ \\
	\,&\, &+\\
	\,&+ \\
	\,& \yF{!}\\
	\,\\
	+ \\
	\yF{!}
	}$}
\ee
which contains no transmutable box.
This completes the proof of Proposition \ref{prop_typeIe}.   We illustrate again the argument of the proof using the following example 
\be
C_3^{(0)} s^*_{(1;3)}  \quad \leftrightarrow \quad  \sum_{r\geq0} (-1)^r h_{3+r}^{\phantom \perp} \circ e_r^\perp \cdot  \scalebox{1.1}{$\superY{
	\,&\,& \circ\\
	\circ&\yF{} 
	}$}
\ee
The different contributions are
\be\begin{split}
& h_3^{\phantom \perp}\circ e_0^\perp \cdot \scalebox{1.1}{$\superY{
	\,&\,& \circ\\
	\circ&\yF{} 
	}$}
=\\
 &\scalebox{1.1}{$\superY{
	\,&\,&\bullet &+&+&+ \\
	\,&\yF{} 
	}$}
+
\scalebox{1.1}{$\superY{
	\,&\,&\bullet &+&+ \\
	\,&+\\
	\yF{!} 
	}$}
+
\scalebox{1.1}{$\superY{
	\,&\,&\bullet &+&+ \\
	\,&\yF{} \\
	+
	}$}
+
\scalebox{1.1}{$\superY{
	\,&\,&\, &+ \\
	\bullet&+&+\\
	\yF{!} 
	}$}
\\
& +
\scalebox{1.1}{$\superY{
	\,&\,&\bullet &+ \\
	\,&+ \\
	+ & \yF{!}
	}$}
+
\scalebox{1.1}{$\superY{
	\,&\,&\,  \\
	\,&+&+ \\
	+ & \yF{!}
	}$}
\end{split}\ee
and
\be\begin{split}
& (-1)h_4^{\phantom \perp}\circ e_1^\perp\cdot \scalebox{1.1}{$\superY{
	\,&\,& \circ\\
	\circ&\yF{} 
	}$}
	=  
	\\
 &
 -\scalebox{1.1}{$\superY{
	\,&\,&\PlusXb &+&+&+ \\
	\,&\yF{} 
	}$}
-
 \scalebox{1.1}{$\superY{
	\,&\,&\PlusXb &+&+ \\
	\,&+\\
	\yF{!} 
	}$}
-
 \scalebox{1.1}{$\superY{
	\,&\,&\PlusXb &+&+ \\
	\,&\yF{} \\
	+
	}$}
-
\scalebox{1.1}{$\superY{
	\,&\,& &+\\
	\PlusXb &+&+ \\
	\yF{!} 
	}$}
\\
&
-
 \scalebox{1.1}{$\superY{
	\,&\,&\PlusXb &+&+ \\
	\,&+\\
	+&\yF{!}\\  
	}$}	
-
\scalebox{1.1}{$\superY{
	\,&\,&\bullet &+&+ &+&+\\
	\sFill{\times} 
	}$}
-
\scalebox{1.1}{$\superY{
	\,&\,&\bullet &+&+ &+\\
	\PlusX \\
	\yF{!} 
	}$}
-
\scalebox{1.1}{$\superY{
	\,&\,&\bullet &+&+\\
	\PlusX &+ \\
	\yF{!} 
	}$}
\end{split}\ee
and
\be\begin{split}
& h_5^{\phantom \perp}\circ e_2^\perp\cdot \scalebox{1.1}{$\superY{
	\,&\,& \circ\\
	\circ&\yF{} 
	}$}
=\\
 &\scalebox{1.1}{$\superY{
	\,&\,&\PlusXb &+&+&+&+ \\
	\sFill{\times}  
	}$}
+
\scalebox{1.1}{$\superY{
	\,&\,&\PlusXb &+&+ &+\\
	\PlusX\\
	\yF{!} 
	}$}
+
\scalebox{1.1}{$\superY{
	\,&\,&\PlusXb &+&+ \\
	\PlusX & +\\
	\yF{!} 
	}$}
\end{split}\ee
where we have marked each transmutable box.  We see that every diagram, except one, is paired -- via the involution $\mathcal{J}$ -- with an identical diagram but with opposite sign.   
We then have
\be
 \sum_{r\geq0} (-1)^r h_{3+r}^{\phantom \perp} \circ e_r^\perp \cdot  \scalebox{1.1}{$\superY{
	\,&\,& \circ\\
	\circ&\yF{} 
	}$}
	=
	\scalebox{1.1}{$\superY{
	\,&\,&\,  \\
	\,&+&+ \\
	+ & \yF{!}
	}$}
\quad
\leftrightarrow \quad
C_3^{(0)} s_{(1;3)}^* = s^*_{(1;3,3)}
\ee
as desired.

\end{proof}

Observe that the previous proof, specialized to the sector $\m=0$, provides a combinatorial proof of Theorem~\ref{TheoBBB}.

\medskip

Finally, we complete the proof of Proposition \ref{prop_typeIe} by considering the case $\epsilon=1$, that is,  we demonstrate that if $n \geq \Lambda_1^{\circledast}$ then 
\be\label{lemtypeIefermion}
C_n^{(1)}s_\La^* = s_{\Om}^* \qquad \text{with} \qquad \Om^* = (n, \La^*) \quad \text{and}\quad \Om^\circledast=(n+1, \La^\circledast).
\ee
\begin{proof}
Using \eqref{eqdefCm1} and \eqref{lemtypeIea}, we can write
\be
C_n^{(1)}s_\La^* = \partial_{\tilde{e}_0}^\perp \circ C_n^{(0)} s^*_\La =  \partial_{\tilde{e}_0}^\perp \circ s_{\Delta}^*
\ee
where $\Delta$ is such that $\Delta^*=(n,\La^*), \Delta^\circledast= (n,\La^\circledast)$. 
 The result then follows from Corollary \ref{corde0tiladj_1} which says that the action of $\partial_{\tilde{e}_0}^\perp$ on $s_\Delta^*$ will produce a single super-Schur function indexed by $\Om $ (corresponding to $\Delta$ with an extra circle in the first row).      
\end{proof}

\subsection{A second automorphism}  
Before we turn to the description of the super-Schur functions of type II and II$^*$ in terms of super Bernstein operators, we need to introduce a second automorphism between basis elements.  This will define a duality relation between Schur superfunctions, and once combined with the usual duality $\omega$ introduced above,  the construction of type II and type II$^*$ will be straightforward.

\medskip

Let $\rho : \mathcal A_{\mathbb Q} \rightarrow  \mathcal A_{\mathbb Q}$ be the automorphism defined by
\be\label{defautrhozz1234}
\rho  ( x_n) = (-1)^{n-1}x_n, \qquad \rho( \ta_n) = \tilde{e}_{n-1}, \qquad n=1,2, \ldots
\ee
Observe that $\rho$, contrary to $\omega$ defined in \eqref{superomega},  mixes the fermionic parts  since $\tilde{e}_{n}$ depends on $x$ and $\ta$.  With this map, the ring of symmetric superpolynomials has new generators
  \be
  \mathcal A_{\mathbb Q} = \mathbb C[h_1, h_2, \ldots]   \otimes \exterior [\ta_1, \ldots, \ta_n] 
  \ee
given that, defining for every superpartition $\La$,
\be\label{labasedeshcheck}
\check{h}_\La = \ta_{\La_1+1} \cdots \ta_{\La_{\mathrm m }+1} h_{\La_{\mathrm m+1}} \cdots h_{\La_\ell} \, \neq h_\La.
\ee
we have that the $\check{h}_\La$ basis is dual to that of the elementary symmetric function:  
\be
\rho(e_\La) = \check{h}_\La, \qquad \rho(\check{h}_\La) = e_{\La}.
\ee
}
Hence, $\rho$ is an  involution ($\rho^2=1$).  From \cite{JL17},  when acting on a super-Schur function of Type I, it gives
\be\label{ladualiterho123456}
\rho(s_\La) = (-1)^{\binom{{\rm m }}{2}} s_{\La'}.
\ee
The adjoint map $\rho^\perp (\neq \rho) $, from the orthogonality relation \eqref{orthops_ssesbsbe}, gives a duality for type I$^*$ super-Schur, or in other words:
\be
\rho^\perp (s_\La^* ) = (-1)^{\binom{\mathrm m}{2}} s_{\La'}^*
\ee
(the map $\rho^\perp$ is also an involution).  
From the scalar product \eqref{orthops_ssesbsbe}, and recalling relations \eqref{sesbeomsign1}, we have 
\be
\begin{split} 
\delta_{\La\Omega} &= \langle  s_\La^{\phantom *}, s_\Omega^* \rangle\\
&=
(-1)^{\binom{{\rm m }}{2}} \langle \rho \circ \rho( s_\La^{\phantom *}), \omega ( \bar{s}_{\Omega'}) \rangle
\\
&=
\langle  s_{\La'}, \rho^\perp \circ \omega( \bar{s}_{\Omega'}  ) \rangle
\\
&=
 (-1)^{\binom{{\rm m }}{2}} \langle \bar{s}^*_\La, \omega \circ \rho^\perp \circ \omega ( \bar{s}_{\Omega'}) \rangle,
\end{split}
\ee
which implies that
\be
\omega \circ \rho^\perp \circ \omega (\bar{s}_{\Omega}) = (-1)^{\binom{{\rm m }}{2}} \bar{s}_{\Om'} \, , \qquad \quad
\omega \circ \rho \circ \omega (\bar{s}_\La^*) = (-1)^{\binom{\mathrm m}{2}} \bar{s}_{\La'}^*.
\ee
Hence, from the two automorphisms $\omega$ and $\rho$ we have obtained the four duality relations for the super-Schur  functions.
\medskip

\begin{remark}  \label{RemTransfoRhoRhoRho} For an element $f\in \mathcal A_{\mathbb Q}$, the derivative w.r.t.~$x_n$ under the homomorphism $\rho$ is
such that   
\[
\rho( \partial_{x_n} f) \neq (-1)^{n-1} \partial_{x_n} \rho(f)
\]
since otherwise this would be equivalent to considering an action of the form $\rho^\perp \circ x_n \circ \rho^\perp$ which cannot be the case because it is known \cite{JL17} that $\rho^\perp$ is not a homomorphism.   Another way of seing this is from the map $\rho \,:\, \ta_n \mapsto \tilde e_{n-1}$, which now has a non-trivial dependence on the $x_n$'s.  Instead, the derivative $\partial_{x_n}$ transforms as
\be\label{rhocdxncrho}
\rho \circ \partial_{x_n} \circ \rho =   (-1)^{n-1} ( \partial_{x_n} + \beta_{-n} )
\ee
(recall the definition of $\beta_n$ in \eqref{VpmetBeta}).  Taking the adjoint on both sides of equation \eqref{rhocdxncrho}, we see that $x_n$ does not transform as nicely as in \eqref{defautrhozz1234} under the automorphism $\rho^\perp$.  
\end{remark}

\medskip

We now combine  the two dualities $\omega$ and $ \rho$ to form their composition
\be
\varphi=\omega\circ \rho.
\ee
  The inverse of $\varphi$ is simply $\varphi^{-1}=\rho\circ\omega$ (since both $\omega, \rho$ are involution).  
Acting on variables $x_n, \ta_n$, we have explicitly
\be\begin{split}
\varphi&: \; x_n \mapsto x_n, \quad \ta_n \mapsto \tilde{h}_{n-1} \\
\varphi^{-1}&: \; x_n \mapsto x_n, \quad \ta_n \mapsto (-1)^{n-1} \tilde{e}_{n-1}
\end{split}
\ee
which shows in particular that  $\omega$ and $\rho$ do not commute.    
Using \eqref{ladualiterho123456}, we can write
\be
\varphi  ( s_\La^{\phantom *} ) = \bar{s}^*_\La, \qquad  \varphi^{-1}  ( \bar{s}^*_\La ) =  s_\La^{\phantom *} 
\ee
so that $\varphi$ interchanges Types I and  II$^*$.
For any two elements $f,g\in \mathcal A_{\mathbb Q}$, it can be checked that:
\be
\langle f,g\rangle = \langle \varphi f, (\varphi^{-1})^\perp g \rangle,
\ee
and thus the map $\varphi^\perp$ relates Type I$^*$ and Type II super-Schurs, i.e.
\be
\varphi^\perp 
( \bar{s}_\La^{\phantom *})  = {s}^*_\La, \qquad 
(\varphi^\perp)^{-1} ( {s}^*_\La) = \bar{s}_\La^{\phantom *}.
\ee

\subsection{Type-II and Type-II$^*$ Bernstein operators}
 We now show how to obtain the super Bernstein operators for the super-Schur functions
$\bar s_\La$ and $\bar{s}^*_\La$.   {This construction is formal in the sense that it only requires
  the Bernstein operators previously introduced and the map $\varphi$.}

\medskip

Looking at Table \eqref{les4famsBCBC}, we expand the generating series $\bar{B}(z;\eta)$ and $\bar{C}(z;\eta)$ respectively in \eqref{genbarBBzz} and in \eqref{genbarCC} as
\be\label{ExpansionbarBC1}
\bar{B}(z;\eta) =  \sum_{n\in \mathbb Z, \epsilon  \in \mathbb Z_2} z^n \eta^{\epsilon} \bar B_n^{(\epsilon)}, \qquad \qquad 
\bar{C}(z;\eta) = \sum_{n\in \mathbb Z, \epsilon  \in \mathbb Z_2} z^n \eta^{1-\epsilon} \bar C_n^{(\epsilon)}
\ee
The Laurent modes  $ \bar B_n^{(\epsilon)}$ (resp.~$\bar C_n^{(\epsilon)}$) for $\epsilon=0$ and $\epsilon=1$ are called the even and odd Bernstein operators of type II (resp.~of type II$^*$).  They satisfy the following relations.

\begin{lemma} \label{LemmbarBC} For $n \in \mathbb Z$, 
we have
\be
 \bar B_n^{(\epsilon)} = (\varphi^\perp)^{-1} \circ C_n^{(\epsilon)} \circ \varphi^\perp,
 \qquad 
  \bar C_n^{(\epsilon)}= \varphi \circ B_n^{(\epsilon)} \circ \varphi^{-1}
\ee
\end{lemma}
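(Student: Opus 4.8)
The plan is to verify the two identities at the level of generating series and then extract the Laurent modes, since the automorphisms $\varphi$, $\varphi^{-1}$, $\varphi^\perp$, $(\varphi^\perp)^{-1}$ all act linearly and commute with extraction of coefficients of $z^n$ and $\eta^\epsilon$. Concretely, I would first establish the operator identities
\be
\bar{B}(z;\eta) = (\varphi^\perp)^{-1} \circ C(z;\eta) \circ \varphi^\perp, \qquad
\bar{C}(z;\eta) = \varphi \circ B(z;\eta) \circ \varphi^{-1},
\ee
and then compare the coefficient of $z^n \eta^\epsilon$ (resp.~$z^n \eta^{1-\epsilon}$) on both sides, using the expansions \eqref{ExpansionbarBC1} together with $C(z;\eta)=\sum z^n\eta^\epsilon C_n^{(\epsilon)}$ and $B(z;\eta)=\sum z^n\eta^{1-\epsilon}B_n^{(\epsilon)}$. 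One technical point to be careful about is that $\eta$ is a Grassmann parameter that is inert under all the automorphisms (which act only on the $x_i,\ta_i$), so conjugation by $\varphi$ etc.\ passes through $\eta$ freely and the bookkeeping of $\eta^\epsilon$ versus $\eta^{1-\epsilon}$ matches exactly as written in \eqref{ExpansionbarBC1}.

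For the $\bar{C}$ identity I would start from \eqref{genbarCC}, $\bar{C}(z;\eta)=H(z)\,\bar{\mathsf c}(z;\eta)\,E^\perp(-z^{-1})\,V_-(z)$, and from \eqref{genBB}, $B(z;\eta)=H(z)\,\mathsf b(z;\eta)\,E^\perp(-z^{-1})$. The key computations are: $\varphi$ fixes each $x_n$, hence fixes $H(z)$ and $E(z)$ and thus commutes past $H(z)$ on the left; the adjoint pieces require the analogue of Remark~\ref{RemTransfoRhoRhoRho}, namely how $\partial_{x_n}$ transforms under $\varphi=\omega\circ\rho$. Since $\rho\circ\partial_{x_n}\circ\rho=(-1)^{n-1}(\partial_{x_n}+\beta_{-n})$ by \eqref{rhocdxncrho} and $\omega$ sends $\partial_{x_n}\mapsto(-1)^{n-1}\partial_{x_n}$, $\beta_{-n}\mapsto(-1)^{n-1}\beta_{-n}$, one gets $\varphi\circ\partial_{x_n}\circ\varphi^{-1}=\partial_{x_n}+\beta_{-n}$; exponentiating this is precisely what produces the extra factor $V_-(z)=\exp(-\sum_{n>0}\frac{z^{-n}}{-n}\beta_{-n})$ appearing in \eqref{genbarCC}. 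Finally the ``fermionic head'' $\mathsf b(z;\eta)=(\exp\eta\partial_{\tilde e_0})\sum_r(-z)^r\tilde e_r$ must be shown to transform under $\varphi$ into $\bar{\mathsf c}(z;\eta)=(\exp\eta\partial_{\ta_1})\sum_r z^r\ta_{r+1}$, which follows from $\varphi(\tilde e_r)$ and the conjugation of $\partial_{\tilde e_0}$ by $\varphi$; this is exactly the content of $\varphi:\ta_n\mapsto\tilde h_{n-1}$, $\varphi^{-1}:\ta_n\mapsto(-1)^{n-1}\tilde e_{n-1}$ together with $\omega(\tilde e_r)=\tilde h_r$. Assembling these pieces gives $\varphi\circ B(z;\eta)\circ\varphi^{-1}=\bar{C}(z;\eta)$, and reading off modes yields $\bar C_n^{(\epsilon)}=\varphi\circ B_n^{(\epsilon)}\circ\varphi^{-1}$.

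For the $\bar{B}$ identity the argument is the adjoint/dual version: starting from \eqref{SVO_C}, $C(z;\eta)=H(z)\,\mathsf c(z;\eta)\,E^\perp(-z^{-1})$, and \eqref{genbarBBzz}, $\bar{B}(z;\eta)=V_+(z)\,H(z;\eta)\,E^\perp(-z^{-1})$, I would conjugate by $\varphi^\perp$ (equivalently, use $\langle f,g\rangle=\langle\varphi f,(\varphi^{-1})^\perp g\rangle$ to translate the $\varphi$-statement into a $\varphi^\perp$-statement). Conjugating $E^\perp(-z^{-1})$ is easy since $\varphi^\perp$ acts on the $x_n^\perp$'s with a controlled correction analogous to the one above, producing the $V_+(z)$ factor on the far left; conjugating $\mathsf c(z;\eta)=\exp(\eta\sum_r(-z)^r\partial_{\tilde e_r}^\perp)$ yields the fermionic head of $H(z;\eta)$, i.e.\ the $\eta\tilde h_k$ terms, while the purely bosonic part $H(z)$ is untouched. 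The main obstacle I anticipate is precisely the careful tracking of the $\beta_{\pm n}$ corrections in $\varphi\circ\partial_{x_n}\circ\varphi^{-1}$ and its adjoint, and checking that their exponentiated contributions land on the correct side (left for $\bar{B}$, right for $\bar{C}$) with the correct sign conventions in $V_\pm(z)$ as defined in \eqref{VpmetBeta}; once that normal-ordering bookkeeping is done, comparing Laurent coefficients is routine and gives the lemma.
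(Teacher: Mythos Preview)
Your approach is exactly what the paper does: its proof reads in full ``Straightforward. This follows from using the transformation rules between bases, as in Remark~\ref{RemTransfoRhoRhoRho},'' and you have simply unpacked that one-liner at the level of generating series. One small slip to fix when you carry it out: $\omega\circ\beta_{-n}\circ\omega=(-1)^{n}\beta_{-n}$ (not $(-1)^{n-1}$), so $\varphi\circ\partial_{x_n}\circ\varphi^{-1}=\partial_{x_n}-\beta_{-n}$, which is precisely the sign needed to land on $E^\perp(-z^{-1})V_-(z)$ rather than its inverse.
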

\begin{proof} Straightforward. This follows from using the transformation rules between bases, as in Remark \ref{RemTransfoRhoRhoRho}.    \end{proof}

{
In particular, since we have obtained that the set of operators $B_n^{(\epsilon)}$ (resp.~$C_n^{(\epsilon)}$) for $n\geq 0$ generates all super-Schur functions of type I (resp.~of type I$^*$), we have the following.  
}

\begin{corollary}{Let $\La$ be a superpartition with $\La^*=(\La_1^*, \ldots, \La_N^*)$, and let $\epsilon_i=\epsilon_i(\La)$ be defined as in Proposition \ref{prop_typeI}.  Then, we have 
\be
\bar{s}_\La  = \bar{B}_{\La^*_1}^{(\epsilon_1)} \ldots \bar{B}_{\La^*_N}^{(\epsilon_N)}\cdot {1}, \qquad \qquad
\bar{s}^*_\La  = \bar{C}_{\La^*_1}^{(\epsilon_1)} \ldots \bar{C}_{\La^*_N}^{(\epsilon_N)} \cdot {1} .
\ee
}
\end{corollary}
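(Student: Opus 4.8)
The plan is to derive this corollary directly from Lemma~\ref{LemmbarBC}, Proposition~\ref{prop_typeI}, and Proposition~\ref{prop_typeIe}, combined with the fact that both automorphisms $\varphi$ and $\varphi^\perp$ fix the identity. First I would observe that since $\varphi=\omega\circ\rho$ and both $\omega$ and $\rho$ are algebra automorphisms sending $1$ to $1$, we have $\varphi(1)=\varphi^{-1}(1)=1$; moreover, taking adjoints with respect to the scalar product \eqref{pssuperpptoto} and using that $\langle 1,\ast\rangle$ picks out the degree-zero and fermionic-degree-zero component, one checks $\varphi^\perp(1)=(\varphi^\perp)^{-1}(1)=1$ as well.

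Next I would treat the type-II case. By Lemma~\ref{LemmbarBC}, $\bar B_n^{(\epsilon)}=(\varphi^\perp)^{-1}\circ C_n^{(\epsilon)}\circ\varphi^\perp$, so a telescoping product gives
\be\label{telescopebarB}
\bar B_{\La^*_1}^{(\epsilon_1)}\cdots\bar B_{\La^*_N}^{(\epsilon_N)}\cdot 1
= (\varphi^\perp)^{-1}\circ C_{\La^*_1}^{(\epsilon_1)}\cdots C_{\La^*_N}^{(\epsilon_N)}\circ\varphi^\perp\cdot 1,
\ee
since all the intermediate factors $\varphi^\perp\circ(\varphi^\perp)^{-1}$ cancel. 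Using $\varphi^\perp(1)=1$ and then Proposition~\ref{prop_typeIe}, the right-hand side equals $(\varphi^\perp)^{-1}(s_\La^*)$, which is $\bar s_\La$ by the identity $(\varphi^\perp)^{-1}(s^*_\La)=\bar s_\La^{\phantom *}$ established just before Section~\ref{SBI}.\ref{Sproof_typeIe} ends, i.e.\ the displayed relation $\varphi^\perp(\bar s_\La)=s^*_\La$. The type-II$^*$ case is entirely parallel: $\bar C_n^{(\epsilon)}=\varphi\circ B_n^{(\epsilon)}\circ\varphi^{-1}$ telescopes to $\varphi\circ B_{\La^*_1}^{(\epsilon_1)}\cdots B_{\La^*_N}^{(\epsilon_N)}\circ\varphi^{-1}\cdot 1=\varphi(s_\La)=\bar s_\La^*$, using $\varphi^{-1}(1)=1$, Proposition~\ref{prop_typeI}, and $\varphi(s_\La^{\phantom *})=\bar s^*_\La$.

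There is really no serious obstacle here; the corollary is a formal consequence of the two propositions and the conjugation formulas in Lemma~\ref{LemmbarBC}. The one point that deserves a line of care — and which I would flag explicitly — is the matching of the exponent labels: in the expansions \eqref{ExpansionbarBC1} the operator $\bar B_n^{(\epsilon)}$ is extracted from the coefficient of $\eta^\epsilon$, whereas $\bar C_n^{(\epsilon)}$ sits at $\eta^{1-\epsilon}$, mirroring the mismatch already present between the type-I expansion (coefficient of $\eta^{1-\epsilon}$ for $B_n^{(\epsilon)}$) and the type-I$^*$ expansion (coefficient of $\eta^\epsilon$ for $C_n^{(\epsilon)}$). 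One must verify that these conventions are chosen precisely so that the conjugation in Lemma~\ref{LemmbarBC} identifies the mode $\bar B_n^{(\epsilon)}$ carrying superscript $\epsilon_i=\epsilon_i(\La)$ with the conjugate of $C_n^{(\epsilon)}$ carrying the same $\epsilon_i$, so that the string appearing in the corollary is exactly the conjugate of the string in Proposition~\ref{prop_typeIe}. Granting that bookkeeping, equation \eqref{telescopebarB} and its type-II$^*$ analogue complete the proof.
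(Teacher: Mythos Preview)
Your proof is correct and follows essentially the same approach as the paper: both arguments reduce the corollary to Propositions~\ref{prop_typeI} and~\ref{prop_typeIe} via the conjugation formulas of Lemma~\ref{LemmbarBC} together with $\varphi^{\pm 1}(1)=(\varphi^\perp)^{\pm 1}(1)=1$. The only cosmetic difference is that the paper phrases the reduction as an induction on the number of parts (base case one part, then the step $\bar B_n^{(\epsilon)}\bar s_\La=\bar s_\Om$), whereas you telescope the whole product at once; these are the same computation.
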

\begin{proof}
The proof of these formulas follows directly from the corresponding formulas for the  Bernstein operators of types I and type I$^*$.  
{We proceed by induction.  Consider the case of $\bar s_\La$.  For one part, we obtain:
\be
\bar{B}_k^{(\epsilon)} \cdot 1 =  (\varphi^\perp)^{-1} \circ C_k^{(\epsilon)} \cdot 1=  (\varphi^\perp)^{-1} \circ
\begin{cases} h_k=s^*_{(\; ; k)} \\ \tilde h_k=s^*_{(k ; \, )}\end{cases}
= 
\begin{cases} \bar{s}_{(\; ; k)}, \qquad (\epsilon=0) \\ \bar{s}_{(k ; \, )}, \qquad (\epsilon=1) \end{cases}.
\ee
Then we need to verify that (for instance), for $n\geq \La_1^\circledast$, we have
\be
\bar{B}_{n}^{(1)} \bar{s}_\La = \bar{s}_\Om, 
\ee}
with $\Om^*=(n,\La^*), \Om^\circledast=(n+1, \La^\circledast)$.  
We use again the expression  from Lemma \ref{LemmbarBC} to write
\be
\bar{B}_{n}^{(1)} \bar{s}_\La =  (\varphi^\perp)^{-1} \circ C_n^{(1)} \circ \varphi^{\perp}  \bar{s}_\La 
=  (\varphi^\perp)^{-1} \circ C_n^{(1)} s^*_\La
=  (\varphi^\perp)^{-1} \circ s^*_\Om 
 = \bar{s}_\Om
\ee
which is the desired result.    
\end{proof}
We stress that this construction is somewhat formal in the sense that  we do not give explicitly the operators in \eqref{ExpansionbarBC1}  in terms of the classical bases and their adjoints.  Note that the explicit expressions can be obtained from the generating series.  For example, we have 
\be
\bar{B}_n^{(0)}= \sum_{r,s\geq0} (-1)^r h_s^{\vphantom\perp}(-\tilde{\beta}_+) \circ h_{n+r-s}^{\phantom\perp}\circ e_r^\perp, \qquad
\bar{B}_n^{(1)}= \sum_{r,s\geq0} (-1)^r h_s^{\vphantom\perp}(-\tilde{\beta}_+) \circ \tilde{h}_{n+r-s}^{\phantom\perp}\circ e_r^\perp,
\ee
where $h_s^{\vphantom\perp}(-\tilde{\beta}_+)$ is the function $h_s(x)$ where each $x_n$ is replaced by the operator $-\beta_n/n$.

\section{Adjoint of the super Bernstein operators and negative modes}
\label{PerpofBCas}

In this section, we consider the action of the adjoint of the super Bernstein operators.  We shall focus on Types I and I$^*$.  So far, we have only considered the positive Laurent modes of the super Bernstein operators.   
We will now obtain the action of new operators that remove columns on the super-Schur functions.    
This will provide a generalization of the standard relation
\be
B^\perp(z) = \omega B(-z^{-1}) \omega,
\ee
which also gives a direct method to
show the action of the negative modes, namely
\be
(-1)^{|\la|} B_{-\la_n'} \ldots B_{-\la_1} \cdot s_\la = 1
\ee
(where $\la'$ denotes the transpose of the partition $\la$).  

\medskip

Let us  introduce the generating series $K(z;\eta)$ and $L(z;\eta)$ given by the expressions
\be
K(z;\eta) = V_+(z)^{-1} \, H(z) \, \mathsf k(z;\eta) \, E^{\perp}(-z^{-1})\ee
where
\be
 \mathsf k(z;\eta)= \bigl( \sum_{r\geq 0} z^{-r} \partial_{\ta_{r+1}} \bigr) \,  \exp (-\eta \ta_1 )
\ee
and
\be
L(z;\eta) = H(z) \, E^\perp(-z^{-1};\eta) \, V_-(z)^{-1}.
\ee
Taking the adjoint of the super Bernstein operators $B(z;\eta)$ and $C(z;\eta)$, these generating series appear in the following way. 
\begin{lemma}\label{LemmBCperpKLrhorho}
We have
\be
B^\perp(z;\eta) = \rho^\perp K(-z^{-1};\eta) \rho^\perp, \qquad C^{\perp}(z;\eta) = \rho L(-z^{-1};\eta) \rho.
\ee
\end{lemma}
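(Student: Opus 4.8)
The plan is to verify the identity directly, as an equality of operator-valued formal Laurent series acting on $\mathcal A_{\mathbb Q}$, by rewriting the left-hand side with the adjoint-reversal rule $(XY)^\perp=Y^\perp X^\perp$ and then conjugating the right-hand side factor by factor. I sketch the $B$-case; the $C$-case runs in parallel, conjugating instead by the homomorphism $\rho$, with $V_-$ in place of $V_+$ and $\mathsf c^\perp$ in place of $\mathsf b^\perp$. Since in $B(z;\eta)=H(z)\,\mathsf b(z;\eta)\,E^{\perp}(-z^{-1})$ only the middle factor carries the odd parameter $\eta$, one first obtains
\be
B^\perp(z;\eta)=E(-z^{-1})\;\mathsf b^\perp(z;\eta)\;H^\perp(z),
\ee
the sole sign subtlety being the reordering rule $(\eta A)^\perp=-\eta A^\perp$ for odd $A$, which is internal to $\mathsf b^\perp$. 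It thus suffices to show that $\rho^\perp K(-z^{-1};\eta)\rho^\perp$ collapses to this expression.

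The key point is that, although $\rho^\perp$ is \emph{not} a homomorphism, conjugation by it can always be rewritten as $\rho^\perp X\rho^\perp=(\rho\,X^\perp\rho)^\perp$ (recall $\rho^\perp$ is an involution and $(\rho^\perp)^\perp=\rho$), which reduces everything to conjugation by the homomorphism $\rho$; the latter is tractable because $\rho$ acts explicitly on generators ($\rho(h_n)=e_n$, and $\rho(\tilde e_r)=\ta_{r+1}$ since $\rho$ is an involution with $\rho(\ta_{r+1})=\tilde e_r$), while $\rho\,\partial_{x_n}\,\rho$ is governed by \eqref{rhocdxncrho} and $\rho\,\partial_{\ta_{r+1}}\,\rho=\partial_{\tilde e_r}$ by \eqref{deriv}. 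Inserting $\rho^\perp\rho^\perp=1$ between the four factors of $K(-z^{-1};\eta)=V_+(-z^{-1})^{-1}H(-z^{-1})\,\mathsf k(-z^{-1};\eta)\,E^\perp(z)$, one computes each conjugate in turn: $\rho^\perp E^\perp(z)\rho^\perp=H^\perp(z)$ with no dressing (because $E^\perp(z)^\perp=E(z)$ is a multiplication, clean under $\rho$); $\rho^\perp\mathsf k(-z^{-1};\eta)\rho^\perp=\mathsf b^\perp(z;\eta)$, immediately from $\rho(\ta_{r+1})=\tilde e_r$ and $\rho\,\partial_{\ta_{r+1}}\,\rho=\partial_{\tilde e_r}$; whereas $\rho^\perp H(-z^{-1})\rho^\perp$ acquires a $\beta$-dressing, namely $\rho^\perp H(-z^{-1})\rho^\perp=V_+(z^{-1})\,E(-z^{-1})$, and $\rho^\perp V_+(-z^{-1})^{-1}\rho^\perp=V_+(z^{-1})^{-1}$. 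The prefactor $V_+(-z^{-1})^{-1}$ built into $K$ is placed exactly so that, after conjugation, it stands immediately to the left of the dressing $V_+(z^{-1})$ and cancels it; collecting the pieces gives $\rho^\perp K(-z^{-1};\eta)\rho^\perp=E(-z^{-1})\,\mathsf b^\perp(z;\eta)\,H^\perp(z)=B^\perp(z;\eta)$, as required.

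For the $C$-identity one repeats this with $\rho$ itself: being a homomorphism, $\rho H(-z^{-1})\rho=E(-z^{-1})$ is already clean, and the only $\beta$-dressing now comes from $\rho E^\perp(z;\eta)\rho=H^\perp(z)\,V_-(z^{-1})\,\mathsf c^\perp(z;\eta)$ (the odd part of $E^\perp(-z^{-1};\eta)$ turning into $\mathsf c^\perp(z;\eta)$ via $\rho\,\partial_{\ta_n}\,\rho=\partial_{\tilde e_{n-1}}$), while the trailing $V_-(-z^{-1})^{-1}$ of $L$ conjugates to $V_-(z^{-1})^{-1}$. Since this time the $V_-$ and $V_-^{-1}$ are not adjacent, one needs in addition the two telescopic conjugation identities $V_-(z^{-1})\,\partial_{\tilde e_r}\,V_-(z^{-1})^{-1}=\partial_{\tilde e_r}+z\,\partial_{\tilde e_{r+1}}$ and $H^\perp(z)\,\partial_{\tilde e_r}\,H^\perp(z)^{-1}=\sum_{k\ge 0}(-z)^k\partial_{\tilde e_{r+k}}$ (both of the flavour of Note~\ref{NoteBn0zzz}): the first collapses $V_-(z^{-1})\,\mathsf c^\perp(z;\eta)\,V_-(z^{-1})^{-1}$ to $1-\eta\,\partial_{\tilde e_0}$, and the second then pushes this past $H^\perp(z)$ to reconstitute $\mathsf c^\perp(z;\eta)$, yielding $\rho L(-z^{-1};\eta)\rho=E(-z^{-1})\,\mathsf c^\perp(z;\eta)\,H^\perp(z)=C^\perp(z;\eta)$.

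The step I expect to be the main obstacle is exactly this bookkeeping: producing the $\beta$-dressings correctly from \eqref{rhocdxncrho}, pinning down how the $\beta_{\pm n}$ conjugate under $\rho$ (which follows from $\rho(\ta_n)=\tilde e_{n-1}$ together with \eqref{deriv}), establishing the telescopic shift identities, and then checking that all the pieces — with every argument, ordering, and anticommuting sign in place (e.g. $(\eta A)^\perp=-\eta A^\perp$ and $\beta_n^\perp=\beta_{-n}$) — assemble into precisely $B^\perp(z;\eta)$ (resp. $C^\perp(z;\eta)$). Everything else is routine manipulation of exponentials, guided throughout by the classical identity $B^\perp(z)=\omega B(-z^{-1})\omega$, of which this lemma is the superspace refinement.
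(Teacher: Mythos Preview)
Your approach is correct and essentially of the same computational nature as the paper's, but organized differently. The paper's one-line proof invokes Lemma~\ref{LemmbarBC}: from $\bar C_n^{(\epsilon)}=\varphi\,B_n^{(\epsilon)}\,\varphi^{-1}$ with $\varphi=\omega\rho$, taking adjoints and using $\omega^\perp=\omega$ gives $B^\perp=\rho^\perp\bigl(\omega\,\bar C^\perp\,\omega\bigr)\rho^\perp$, and a short $\omega$-conjugation (which is diagonal on the generators) identifies $\omega\,\bar C^\perp(z;\eta)\,\omega$ with $K(-z^{-1};\eta)$; the $C$-identity is obtained analogously from $\bar B$. In contrast, you bypass $\bar B,\bar C$ entirely and conjugate $K$ and $L$ directly by $\rho^\perp$ and $\rho$, factor by factor, using \eqref{rhocdxncrho} and the consequence $\rho\,\beta_{-n}\,\rho=(-1)^n\beta_{-n}$ (which you should make explicit, since it is what forces $\rho^\perp V_+(-z^{-1})^{-1}\rho^\perp=V_+(z^{-1})^{-1}$ and $\rho\,V_-(-z^{-1})^{-1}\rho=V_-(z^{-1})^{-1}$). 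Both routes rest on exactly the same transformation rules; the paper simply packages the $\rho$-conjugation inside Lemma~\ref{LemmbarBC} and then only has the easy $\omega$-step left.

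One simplification for your $C$-case: the telescopic identities are unnecessary. Since $E^\perp(z)$ commutes with $\partial_{\ta_k}$, you may write $E^\perp(z;\eta)=\bigl(1-\eta\sum_{k\ge1}(-z)^{k-1}\partial_{\ta_k}\bigr)\,E^\perp(z)$ and conjugate in that order; then
\[
\rho\,E^\perp(z;\eta)\,\rho=\mathsf c^\perp(z;\eta)\,H^\perp(z)\,V_-(z^{-1}),
\]
so that in $\rho\,L(-z^{-1};\eta)\,\rho=E(-z^{-1})\cdot\mathsf c^\perp(z;\eta)\,H^\perp(z)\,V_-(z^{-1})\cdot V_-(z^{-1})^{-1}$ the $V_-$-factors are adjacent and cancel outright, yielding $C^\perp(z;\eta)$ with no further commutation needed.
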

\begin{proof} Almost immediate from Lemma \ref{LemmbarBC}.  
\end{proof}

When expanding in Laurent modes in the following way
\be
K(z;\eta)= \sum_{n\in \mathbb Z}z^n( \eta K_n^{(0)} + K_n^{(1)}) , \qquad \quad
L(z;\eta)= \sum_{n\in \mathbb Z}z^n(  L_n^{(0)} + \eta L_n^{(1)})
\ee
the modes have a very simple action on the super-Schurs.    
\begin{corollary}
Let $\La$ be a superpartition, and let $\La'$ be its transpose.  Let $\epsilon'_i=\epsilon_i(\La')$ be defined as in Proposition \ref{prop_typeI}.  Then, we have
\be\label{Res1CorrAdj1}
1 = (-1)^{|\La|} L_{-(\La')^*_N}^{(\epsilon'_N)} \ldots L_{-(\La')^*_1}^{(\epsilon'_1)} \cdot s_\La
\ee
and
\be\label{Res2CorrAdj2}
1 = (-1)^{|\La|} K_{-(\La')^*_N}^{(\epsilon'_N)} \ldots K_{-(\La')^*_1}^{(\epsilon'_1)} \cdot s^*_\La.
\ee
\end{corollary}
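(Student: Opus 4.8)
The plan is to derive both identities \eqref{Res1CorrAdj1} and \eqref{Res2CorrAdj2} as formal consequences of the positive-mode creation formulas (Proposition~\ref{prop_typeI}, Proposition~\ref{prop_typeIe}, and Theorem~\ref{TheoBBB}) together with the conjugation relations of Lemma~\ref{LemmBCperpKLrhorho}. The key observation is that the operators $K_n^{(\epsilon)}$ and $L_n^{(\epsilon)}$ are, up to the automorphisms $\rho^\perp$ and $\rho$, the Laurent modes of $B^\perp(-z^{-1};\eta)$ and $C^\perp(-z^{-1};\eta)$; hence their negative modes are (conjugates of) the adjoints of the positive Bernstein modes. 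First I would record exactly how the modes of $B^\perp(z;\eta)$ relate to those of $B(z;\eta)$: expanding $B^\perp(z;\eta)=\sum_{n}z^n(\eta (B_n^{(0)})^\perp + (B_n^{(1)})^\perp)$ (note the adjoint of the odd piece picks up no extra sign beyond what the Grassmann reordering already gives), and then substituting $z\mapsto -z^{-1}$ and conjugating by $\rho^\perp$ to identify $K_{-m}^{(\epsilon)}$ with $\rho^\perp\circ(B_m^{(\epsilon)})^\perp\circ\rho^\perp$ up to an explicit sign $(-1)^{m+\epsilon}$ or similar; similarly $L_{-m}^{(\epsilon)}=\rho\circ(C_m^{(\epsilon)})^\perp\circ\rho$ up to a sign. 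The bookkeeping of these signs is where care is needed.

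**Next**, I would use the basic adjoint principle: if $s_\La = C_{\La_1^*}^{(\epsilon_1)}\cdots C_{\La_N^*}^{(\epsilon_N)}\cdot 1$ and these operators build $s_\La$ from $1$ by adding rows, then the adjoints $(C_n^{(\epsilon)})^\perp$ act by \emph{removing} rows in reverse order, so that $(C_{\La_N^*}^{(\epsilon_N)})^\perp\cdots (C_{\La_1^*}^{(\epsilon_1)})^\perp \cdot s_\La$ is a scalar multiple of $1$; the scalar is $1$ because $\langle s_\La, s_\La^*\rangle = 1$ and $\langle C_n^{(\epsilon)} f, g\rangle = \langle f, (C_n^{(\epsilon)})^\perp g\rangle$. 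Concretely, $\langle 1,1\rangle\cdot(\text{scalar}) = \langle C_{\La_1^*}^{(\epsilon_1)}\cdots\cdot 1,\ s_\La^*\rangle = \langle s_\La, s_\La^*\rangle = 1$. Then applying $\rho$ (resp.\ $\rho^\perp$), which by \eqref{ladualiterho123456} and its adjoint sends $s_\La\leftrightarrow \pm s_{\La'}$ and fixes $1$, converts the row-removal on $s_\La$ into the statement about $L_{-(\La')_\bullet^*}^{(\epsilon')}$ acting on $s_\La$, with the conjugating sign $(-1)^{\binom{\mathrm m}{2}}$ from $\rho$ cancelling against the one appearing on both sides. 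The transpose enters precisely because $\rho$ exchanges $\La$ and $\La'$: removing the $i$-th row of $s_{\La'}$ corresponds to removing the $i$-th column of $s_\La$, i.e.\ the parts of $(\La')^*$ in order.

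**For the overall sign** $(-1)^{|\La|}$, I would track where it comes from: each $(C_n^{(\epsilon)})^\perp$ carries a factor from the $z\mapsto -z^{-1}$ substitution in Lemma~\ref{LemmBCperpKLrhorho} (an $E^\perp$ becomes $E$, an $H$ becomes $H^\perp$, and the grading variable flips sign), and summing the contributions $(-1)^{(\La')^*_i}$ over all $i$, together with the fermionic part, assembles to $(-1)^{\sum_i (\La')^*_i} = (-1)^{|\La|}$ since $|\La'| = |\La|$ (the circles contribute to the degree on both $\La$ and $\La'$ consistently). I would write this as: $\prod_i (-1)^{(\La')^*_i} = (-1)^{|\La^\circledast|}$ or $(-1)^{|\La|}$ after accounting for the shift between $\La^*$ and $\La^\circledast$, then verify on a small example like $\La=(1;1)$ or $\La=(0;2)$ to pin down the exact exponent. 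The main obstacle I anticipate is exactly this sign-and-grading bookkeeping: keeping straight the three sources of signs — the adjoint of the odd operators (Grassmann reordering, cf.\ the remark after $H^\perp(z;\eta)$), the $z\mapsto -z^{-1}$ substitution, and the $(-1)^{\binom{\mathrm m}{2}}$ from $\rho$ and $\rho^\perp$ — and confirming they collapse to the clean factor $(-1)^{|\La|}$. Everything else is a formal adjunction argument that mirrors the classical identity $(-1)^{|\la|}B_{-\la'_n}\cdots B_{-\la_1}\cdot s_\la = 1$ recalled at the start of the section.
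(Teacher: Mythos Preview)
Your proposal is correct and follows essentially the same route as the paper: use orthogonality to see that the adjoints $(C_n^{(\epsilon)})^\perp$ (resp.\ $(B_n^{(\epsilon)})^\perp$) strip rows from $s_\La$ (resp.\ $s_\La^*$), then invoke Lemma~\ref{LemmBCperpKLrhorho} to rewrite these adjoints as $\rho$-conjugates (resp.\ $\rho^\perp$-conjugates) of the $L_{-n}^{(\epsilon)}$ (resp.\ $K_{-n}^{(\epsilon)}$) modes, with the transposition $\La\leftrightarrow\La'$ arising from $\rho(s_\La)=(-1)^{\binom{\mathrm m}{2}}s_{\La'}$ and the global sign $(-1)^{|\La|}$ coming from the $z\mapsto -z^{-1}$ substitution. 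The only minor difference is that the paper argues row-removal one step at a time (showing $(C_n^{(\epsilon)})^\perp s_\La = s_\Om$ with $\Om$ equal to $\La$ minus its first row, then iterating), whereas you argue the full composite equals $1$ in one shot via the degree count plus $\langle s_\La, s_\La^*\rangle=1$; both are valid and amount to the same thing.
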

\begin{proof}  Consider the action of $C^\perp(z;\eta)$ on the super-Schur $s_\La$ with $\La_1^*=n$.     
From the orthogonality relation \eqref{orthops_ssesbsbe}, we have that ${C^{(\epsilon)\perp}_n} s_\La= s_\Om$ with $\Om^*=(\La_2^*, \La_3^*, \ldots)$, where $\epsilon=0$ if $\La_1$ is bosonic and $\epsilon=1$ if $\La_1$ is fermionic.  
But, from Lemma~\ref{LemmBCperpKLrhorho}, we have that  the action of $C^{(\epsilon) \perp}_n$ is related to the action of $(-1)^nL_{-n}^{(\epsilon)}$ by conjugation with the homomorphism $\rho$ which acts by the duality relation \eqref{ladualiterho123456} on the super-Schur $s_\La$.  Thus $(-1)^nL_{-n}^{(\epsilon)}$ must remove the first row of $\La'$ and  \eqref{Res1CorrAdj1} follows  (and similarly for \eqref{Res2CorrAdj2}).
\end{proof}

\section{Outlook: superpolynomials realization of the SKP hierarchy}  
\label{ToSKPzzz}
In this article, we have presented four families of Bernstein operators in superspace,  each family generating a given Schur function in superspace.  This is summarized in Table \eqref{les4famsBCBC}.  
Apart from this novel method of generating super-Schur functions, one of the main motivation of this work is to make a connection with the super Kadomtsev-Petviashvili (KP) hierarchy.  
In order to substantiate this connection, we will sketch in this section the general idea.  The precise details will be postponed to a future work.

\medskip

   The KP flow \cite{Miwa00}  is generated by an Hamiltonian $H(t)=\sum_{n>0}t_n H_n$ made up of the positive modes of the current $H_n$ (the free boson modes) and of time variables  $t_n$.    
The tau-function $\tau(t;g)$ is viewed as the vacuum expectation value of the flow of the $\mathrm{GL}(\infty)$ orbit of the vacuum vector in the spin module (associated to the Clifford algebra): 
\[
\tau(t;g) = \langle 0 | \mathrm e^{H(t)} g |0\rangle, \qquad g\in \mathrm{GL}(\infty)
\]
where $|0\rangle$ denotes the vacuum.  
It is well-known that the KP hierarchy  can be formulated as a bilinear identity which is satisfied by the tau-function $\tau(t;g)$. 
In the boson-fermion correspondence, and in the language of Bernstein operators, one first defines the bilinear operator \cite{Jarvis, Goulden}
\[
S = \oint \frac{\dd z}{2\pi \ii}B(z) \otimes B^\perp(z^{-1}),
\]
and the KP equation reads
\[
S(\tau\otimes \tau)=0.
\]
This last relation is known as the Hirota bilinear identity.  In that setting, the function $\tau(t;g)$ is a symmetric function that can be expanded in the basis of Schur functions (with the time variable $t_i$ corresponding to  the power-sum $x_i$).  Knowing the tau-function from the Hirota equation, each coefficient in the expansion of $\tau(t;g)$ in the Schur basis can be obtained using  
\[
s_\la( \partial_{t_1}, \tfrac12 \partial_{t_2}, \ldots) \cdot \tau(t;g) \bigl\vert_{t=0}
\]
for any partition $\la$.

\medskip
When considering the generalization to superspace,  it is natural to consider the super-KP flow generated by a super Hamiltonian $\mathcal H(t)$ which contains the positive modes of the even current $H_n, n=1,2,\ldots$ (as above) and the positive modes of the odd current $H_n, n=\frac12,\frac32,\ldots$ (the free fermion modes), with an infinite set of even and odd time variables $t^{(0)}=(t_1, t_2,\ldots)$, $t^{(1)}=( t_{1/2}, t_{3/2}, \ldots)$.  Set $\mathcal H(t) = \sum_{n\in \frac12 \mathbb Z_{>0}} t_n H_n$.  In view of the Cauchy formula \eqref{superpartZ}, we have a natural expansion of the flow over the ring of symmetric functions in superspace in the basis of super-Schurs:
\[
\mathrm e^{\mathcal H(t)} = \sum_\La s_\La^{\phantom*}(t^{(0)};t^{(1)}) s_\La^*(H_1, \tfrac12 H_2, \ldots; H_{1/2}, H_{3/2}, \ldots).
\]
From the work of Kac \& Van De Leur \cite{KvdL,KvdL_tB}, where the authors proposed to define a super-KP hierarchy by extending the algebra to the complex free superfermion  algebra, the elements of the spin module are written as
\[
|\mathsf Y \rangle =  \psi_{r_1}^{\phantom*}\cdots \psi_{r_n}^{\phantom*} \psi_{s_1}^* \cdots \psi_{s_n}^* \, |0\rangle,
\]
where $\psi_i^{\phantom*}, \psi_j^*$ with $i,j\in \frac12\mathbb Z$  form a Clifford superalgebra, and where $r_1>\ldots> r_n>0 > s_n>\ldots>s_1$.  Our claim is that, under a superpolynomial realization, these superfermion modes are mapped to
\[
\psi_r\, \mapsto \begin{cases} B_r^{(0)} \qquad r\in \mathbb Z  \\ B_{r-\frac12}^{(1)} \qquad r\in \mathbb Z+\tfrac12 \end{cases}
\qquad \text{and} \qquad
\psi_s^*\, \mapsto \begin{cases} K_s^{\perp  (0)} \qquad s\in \mathbb Z  \\ K_{s+\frac12}^{\perp (1)} \qquad s\in \mathbb Z+\tfrac12 \end{cases}
\]
so that the state $|\mathsf Y\rangle$ is then represented by a super-Schur function
\[
\langle 0 | \mathrm e^{\mathcal H(t)} \vert \mathsf Y \rangle = s_\La(t^{(0)};t^{(1)})
\]
where the superpartition $\La$ corresponds to the superdiagram $\mathsf Y$ generated by the super Bernstein operators. By analogy, we thus consider the tau-function in superspace to be naturally expanded as
\[
\tau(t^{(0)};t^{(1)};g) = \langle 0 | \mathrm e^{\mathcal H(t)} g \vert 0 \rangle= 
\sum_\La c_\La(g)  s_\La(t^{(0)};t^{(1)}) \quad \Leftrightarrow \quad
 c_\La(g) = s_\La^*(\partial_{t_1},  \tfrac12 \partial_{t_2}, \ldots; \partial_{t_{1/2}}, \partial_{t_{3/2}}, \ldots  )\cdot \tau \bigl\vert_{t=0}
\]  
(note that the the dual family $s_\La^*$ appears in the value of the coefficients), where the $t^{(0)}$ (resp.~$t^{(1)}$) variables correspond to the super power-sum $x$ (resp.~$\ta$)  variables.

As for the bilinear identity satisfied by the super tau-function, we have that both the super Bernstein vertex operators $B(z;\eta)$ and $C^\perp(z;\eta)$ define an action on the basis of super-Schur functions of type I. The natural candidate in superspace generalizing the above operator $S$ is thus
\[
\Xi(\tau\otimes \tau)=0,
\]
where
\[
\Xi= \oint \frac{\dd z \dd \eta}{2\pi \ii} B(z;\eta)\otimes C^\perp(z^{-1};-\eta).
\]

\medskip 

Future work will address the precise connection between the super-KP hierarchy and superpolynomials.    
In particular, the following points will be investigated:
\begin{enumerate}
\item the underlying algebra satisfied by the super Bernstein operators;
\item the symmetry algebra/group to which the element $g$ in the  super tau-function belongs;
\item the super Hirota identities from the operator $\Xi$.  
\end{enumerate}

\begin{appendix}

\section{Manipulating bases of symmetric superfunctions}

\label{ExtraRels}

In this (short) appendix, we present how to obtain relations when manipulating different bases in the superpolynomials ring, in particular when partial derivatives are involved, such as 
$
(\partial_{x_m} \tilde{e}_n), \, (\partial_{\tilde{e}_m} \ta_{n}), (\tilde{e}_m^\perp \, h_n^{\phantom \perp}),$ etc.
The trick is to use  the generating series \eqref{GFhe} and \eqref{GFhee} which relate $h_n,\tilde h_n$ or $e_n,\tilde e_n$ to $x_n$ and $\ta_n$.  For instance, using \eqref{GFhee} 
\[
E(z;\eta)=   \exp \l(- \sum_{m>0} (-z)^m \l(x_m + \frac{\eta}{z} \ta_m\r)\r),
\]
 we obtain the following two equations
\be
\partial_{x_m} E(z;\eta) = (-1)^{m-1} z^m E(z;\eta), \qquad
\partial_{\ta_m}E(z;\eta)= (-1)^m z^{m-1} \eta E(z;\eta).
\ee
Substituting in these expressions the expansion into (super) elementary symmetric functions, i.e.~
\[
E(z;\eta) = \sum_{m\geq 0} z^m ( e_m + \eta \tilde{e}_m) ,
\]
and equating both sides, we then have
\be\label{dex}
\partial_{x_m}e_n = (-1)^{m-1}e_{n-m}, \qquad \partial_{x_m}\tilde{e}_n = (-1)^{m-1}\tilde{e}_{n-m},
\ee
and
\be
\partial_{\ta_m}e_n =0, \qquad \partial_{\ta_m}\tilde{e}_n = (-1)^{m-1} e_{n-m+1}.
\ee
Applying the duality $\omega$ on the previous relations, we obtain immediately
\be
\partial_{x_m}h_n = h_{n-m}, \qquad \partial_{x_m}\tilde{h}_n = \tilde{h}_{n-m}, \qquad
\partial_{\ta_m}h_n =0, \qquad \partial_{\ta_m}\tilde{h}_n =  h_{n-m+1}.
\ee
Then, consider the following generating series for the basis $\check{h}_\La$ (see eq.~\eqref{labasedeshcheck})
\[
\check{H}(z;\eta) = \exp \bigl( \sum_{m>0} z^m ( x_m -\frac{\eta}{z} (-1)^m \tilde{e}_{m-1})\bigr)
\]
from which we get the relation
\be
\partial_{\tilde{e}_m} \check{H}(z;\eta) = - \eta (-z)^m \check{H}(z;\eta).
\ee
Substituting
\[
\check{H}(z;\eta) =\sum_{m\geq 0}z^m (h_m + \eta \ta_{m+1}) 
\]
in previous equation and equating both sides, we obtain
\be\label{del_tilde_em_onta}
\partial_{\tilde{e}_m}  x_n=  \partial_{\tilde{e}_m}  h_n  =0\qquad \text{and}\qquad \partial_{\tilde{e}_m}\ta_{n+1} = (-1)^m h_{n-m}.
\ee

From the above results, it is then possible to express derivatives in  a different basis, for instance from \eqref{del_tilde_em_onta}, we have
\be\label{rel_app_del_et12}
\partial_{\tilde{e}_m} = \sum_{r>0} (\partial_{\tilde{e}_m} x_{r} )\, \partial_{x_{r}}+ \sum_{r\geq 0}(\partial_{\tilde{e}_m} \ta_{r+1} )\, \partial_{\ta_{r+1}} = (-1)^m \sum_{r\geq 0}h_r \partial_{\ta_{m+r+1}}.
\ee
Or, from \eqref{dex},
\be
\partial_{x_m} = (-1)^{m-1} \sum_{r\geq 0} \bigl( e_{r} \partial_{e_{m+r}} + \tilde{e}_r \partial_{\tilde{e}_{m+r}} \bigr).
\ee

\medskip

We end this section with the presentation of exchange relations between elementary and homogeneous superfunctions,  when one element is the adjoint map.  This may be computed directly from the generating series, using the relation: $\exp(A) \exp(B) = \exp([A,B]) \exp(B) \exp(A)$ whenever $[A,B]$ is central.   For example, we have
\be
E^\perp(z_1;\eta_1) \circ H(z_2;\eta_2) = (1+z_1z_2+\eta_1\eta_2) \, H(z_2;\eta_2) \circ E^\perp(z_1;\eta_1)
\ee
which translates into the set of identities
\be \label{rels_exch_eh}
\begin{split}
e_m^\perp\circ h_n^{\phantom \perp} & = h_n^{\phantom \perp} \circ e_m^\perp + h_{n-1}^{\phantom \perp} \circ e^\perp_{m-1} \\
\tilde{e}_m^\perp\circ h_n^{\phantom \perp} & = h_n^{\phantom \perp} \circ \tilde{e}_m^\perp + h_{n-1}^{\phantom \perp} \circ \tilde{e}^\perp_{m-1} \\
e_m^\perp\circ \tilde{h}_n^{\phantom \perp} & = \tilde{h}_n^{\phantom \perp} \circ e_m^\perp + \tilde{h}_{n-1}^{\phantom \perp} \circ e^\perp_{m-1} \\
\tilde{e}_m^\perp\circ \tilde{h}_n^{\phantom \perp} & = -\tilde{h}_n^{\phantom \perp} \circ \tilde{e}_m^\perp - \tilde{h}_{n-1}^{\phantom \perp} \circ \tilde{e}^\perp_{m-1} + h_n^{\phantom \perp} \circ e_m^\perp.
\end{split}
\ee
Likewise, we have that
\be
H^\perp(z_1;\eta_1) \circ H(z_2;\eta_2) = ({1-z_1z_2-\eta_1\eta_2})^{-1}\, H(z_2;\eta_2) \circ H^\perp(z_1;\eta_1)
\ee
implies
\be\label{hhperp}\begin{split}
h_m^\perp \circ h_n^{\phantom \perp} &=  \sum_{r\geq 0} h_{n-r}^{\phantom \perp} \circ h_{m-r}^\perp  \\
h_m^\perp \circ \tilde{h}_n^{\phantom \perp} &=  \sum_{r\geq 0} \tilde{h}_{n-r}^{\phantom \perp} \circ h_{m-r}^\perp \\
\tilde{h}_m^\perp \circ h_n^{\phantom \perp} &=  \sum_{r\geq 0} h_{n-r}^{\phantom \perp} \circ \tilde{h}_{m-r}^\perp \\
\tilde{h}_m^\perp \circ \tilde{h}_n^{\phantom \perp} &=   - \sum_{r\geq 0} \bigl( \tilde{h}_{n-r}^{\phantom \perp} \circ \tilde{h}_{m-r}^\perp - r\, h_{n-r+1}^{\phantom \perp} \circ h_{m-r+1}^\perp \bigr).
\end{split}\ee
Note that applying the duality $\omega$ on both sides of relations \eqref{rels_exch_eh} and \eqref{hhperp}, we obtain the corresponding relations of the form $h_m^\perp \circ e_n^{\phantom \perp}$ and $e_m^\perp\circ e_n^{\phantom \perp}$.

\section{The Pieri rules}
\label{PieriApp}
We present in this section the Pieri rules that are necessary
for computing the Bernstein operators $B^{(1)}_n$ and $C_n^{(0)}$ (recall that
 $B^{(0)}_n$ and $C_n^{(1)}$ can then be obtained by acting with $\partial_{\tilde e_0}$ and $\partial^{\perp}_{\tilde e_0}$ respectively).  
Explicitly, the different required Pieri rules are
\be \begin{split}
 &B^{(1)}_n: \;e_r s_\La^*\; \text{and}\; \ta_r s_\La ;
 \\
 & C^{(0)}_n: \;e_r s_\La\; \text{and}\; h_r s_\La^*.
 \end{split}
\ee
Note that the {inverse} Pieri rules, i.e. $e^\perp_r s_\La$ and $e^\perp_r s^*_\La$ that
 are needed respectively for computing $B_n^{(1)}$ and $C_n^{(0)}$, are replaced by their dual using the relations  
\be
\langle e_r^\perp s_\La^{\phantom *},s^*_\Omega\rangle=
\langle s_\La^{\phantom *}, e_r s^*_\Omega\rangle 
\qquad\text{and}\qquad 
\langle e_r s_\La^{\phantom *},s^*_\Omega\rangle=
\langle s_\La^{\phantom *}, e_r^\perp s^*_\Omega\rangle.
\ee  
We now describe the different Pieri rules.\footnote{The first three rules are illustrated (and proved) in \cite{JL17}. For further examples of the first two, see \cite{BM1}.}

\subsection{The Pieri rule $e_r s_\La$}
We have
\be
e_r s_{\La} = \sum_\Omega  s_{\Omega},
\ee
where the sum is over all superpartitions $\Omega$ of fermionic degree $\mathrm{m}(\La)$ such that:
\begin{itemize} 
\item $\Om^*/\La^*$ is a vertical $r$-strip.

\item The 
circles of $\La$ can be moved subject to the following restrictions:
\begin{enumerate}
\item[(i)] a circle in the first column can be moved vertically without 
restrictions;
\item[(ii)] a circle not in the first column can be moved vertically in the same column as 
  long as there is a square in the column
  immediately to its left it in the original 
diagram $\La$;
\item[(iii)] a circle can be moved horizontally in the same row by at most one column. 
\end{enumerate}

\end{itemize}

\subsection{The Pieri rule $\ta_r s_\La$}
We have
\be\label{Pierithetar}
\ta_r s_{\La} = \sum_\Omega (-1)^{\#\ell(\circledast)} s_{\Omega},
\ee
where $\#\ell(\circledast)$ denotes the number of circles above the one added in the diagram
and
where  the sum is over all superpartitions $\Omega$ of fermionic degree $\mathrm{m}(\La)+1$ such that:
\begin{itemize}
\item $\Om^*/\La^*$ is a horizontal $(r-1)$-strip and when $\Om/\La$ is straightened horizontally the circle is in the rightmost position.

\item The 
circles of $\La$ can be moved subject to the following restrictions:
\begin{enumerate}
\item[(i)] a circle in the first row can be moved horizontally without 
restrictions;
\item[(ii)] a circle not in the first row can be moved horizontally along the same row  as 
long as there is a square in the row just above it in the original 
diagram $\La$;
\item[(iii)] a circle can be moved vertically in the same column by at most one row.
\end{enumerate}

\end{itemize}

\subsection{The Pieri rule $h_rs_\La^*$} We have
\be\label{Pierihset123}
 h_r s^*_{\La} = \sum_\Omega  s^*_{\Omega},
\ee
where 
 the sum is over all superpartitions $\Omega$ of fermionic degree $\mathrm{m}(\La)$ such that:
\begin{itemize}
 
\item $\Om^*/\La^*$ is a horizontal $r$-strip.

\item
The $i$-th circle, starting from below, of $\Om$ is either in the same row  as the
$i$-th circle of $\La$ if $\Om^*/\La^*$ does not contain a box in that row or one row below that  of the $i$-th circle of $\La$ if $\Om^*/\La^*$  contains a box in the row  of the $i$-th circle of $\La$.

\end{itemize}

\subsection{The Pieri rule $e_rs_\La^*$}
The following Pieri rule, which has been overlooked in previous works, is presented with its proof: 
\be\label{PieriRRtIII1}e_r s^*_\La=\sum_{\Omega}
 {s}_\Omega^*
\ee
 where the sum is over all superpartitions $\Omega$ of fermionic degree $\mathrm{m}(\La)$ such that:
\begin{itemize}
\item $\Om^*/\La^*$ is a vertical $r$-strip 

\item The 
circles of $\La$ can be moved subject to the following restrictions:
\begin{enumerate}

\item[(i)] a circle cannot overpass another one;

\item[(ii)]  a circle cannot be moved in a row which has an added box;

\item[(iii)] {the addition of  a bosonic box to a fermionic row bumps the circle to the end of the subsequent row (if it is bosonic)  and this bumping can be done repeatedly.}

\end{enumerate}
\end{itemize}

\medskip
\noindent
For example, marking the added boxes with a $\times$, we have
\[
e_{2} \cdot
\scalebox{1.1}{$\superY{
\,&\,& \yF{}\\
\, 
}$}
 =
 \scalebox{1.1}{$\superY{
\,&\,& \yF{}\\
\, \\
\times\\
\times
}$}
+
\scalebox{1.1}{$\superY{
\,&\,& \yF{}\\
\, & \times \\
\times
}$}
+
\scalebox{1.1}{$\superY{
\,&\,& \times\\
\, & \yF{}\\
\times
}$}
+
\scalebox{1.1}{$\superY{
\,&\,& \times\\
\, & \times\\
\yF{}
}$}.
\]

\subsection{Proof of the Pieri rule $e_rs_\La^*$} \label{B5}
\begin{proof}  We proceed by induction.   The case $e_1 s_\La^*$  holds since $e_1=h_1$ and the Pieri rule for $h_1s_\La^*$ stated in
    \eqref{Pierihset123}   is compatible with \eqref{PieriRRtIII1} in the $e_1$ case.  We now assume that the Pieri rule is valid for $e_r, e_{r-1}, \ldots, e_1$ and use the relation \cite{MacSym95}
      \be
e_{r+1} +   \sum_{s=1}^{r+1} (-1)^{s} h_{s} \,  e_{r+1-s}=0   
  \ee
  to compute the $e_{r+1}$ case.  Since the previous relation uniquely determines $e_{r+1}$, it suffices to show 
    that our Pieri rules are such that 
 \be \label{eqaprouver}
e_{r+1} s_\La^* +   \sum_{s=1}^{r+1} (-1)^{s} h_{s} \circ e_{r+1-s}   \, s_\La^*=0\, .
\ee
 In order to prove \eqref{eqaprouver}, we will first 
introduce a notation that associates a diagram to each super-Schur functions
that arise when computing the  Pieri rules.  We will then construct  
a sign-reversing involution  that cancels the diagrams two-by-two.  

\medskip

When acting with the elementary symmetric function $e_{r+1-s}$ on $s_\La^*$, for $0\leq s\leq r+1$, we mark every box added to the original superpartition $\La$ by the symbol $\times$.  Here is a simple example of such a diagram 
\be
e_{4} \cdot
\scalebox{1.1}{$\superY{
\,&\,&\,&\,&\yF{}\\
\,&\,&\,&\, \\
\,&\,&\, \\
\,&\, \\
\,&\yF{}\\
\,\\
\,\\
\yF{}
}$}
\supset
\scalebox{1.1}{$\superY{
\,&\,&\,&\,&\times \\
\,&\,&\,&\,&\times \\
\,&\,&\,&\times \\
\,&\,&\yF{!} \\
\,&\yF{}\\
\,\\
\,\\
\times\\
\yF{!}
}$}
\ee
(we indicate with a "!" the circles that were moved from the original diagram).  

\medskip

Then, when acting with $h_{s+1}$, we will mark the new boxes with the symbol $+$. Hence, each diagram is decorated with the symbols $\times,+$ (and ``!''), where the boxes with a $\times$ form a vertical strip while  those with a $+$ form a horizontal strip.  
For example, we have
\be
h_2\circ e_{4} \cdot
\scalebox{1.1}{$\superY{
\,&\,&\,&\,&\yF{}\\
\,&\,&\,&\, \\
\,&\,&\, \\
\,&\, \\
\,&\yF{}\\
\,\\
\,\\
\yF{}
}$}
\supset
\scalebox{1.1}{$\superY{
\,&\,&\,&\,&\times & + \\
\,&\,&\,&\,&\times \\
\,&\,&\,&\times&+ \\
\,&\,&\yF{!} \\
\,&\yF{}\\
\,\\
\,\\
\times\\
\yF{!}
}$}.
\label{EX}
\ee

\medskip 

We now define an involution $\Upsilon$ acting on the diagrams.
Reading  the diagram from bottom left  to top right,  $\Upsilon$  
takes the first decorated box and, depending on its content,
changes a $\times$ into a  $+$  or a $+$ into a $\times$.
The application $\Upsilon$, which is obviously an involution,
is well defined since the $\times$'s (resp. the $+$'s) will
still form a vertical strip (resp. horizontal strip) by definition of the
first decorated box.  Moreover, the circles marked with a  ``!'' are unaffected
since both $\times$ and $+$ move a circle to the next row (although the $\times$
does it in a transitory way, the $+$ will then, if it applies, continue the movement of the circle to the next row).
The action of $\Upsilon$ results in pairing two diagrams having the same shape and only differing by the marking of one $\times/+$ box. As such, they will 
appear with   
opposite sign in \eqref{eqaprouver} and thus cancel each other. 
For instance, the involution transforms the diagram to the right of \eqref{EX} as follows:
\be
\Upsilon:\;
\scalebox{1.1}{$\superY{
\,&\,&\,&\,&\times & + \\
\,&\,&\,&\,&\times \\
\,&\,&\,&\times&+ \\
\,&\,&\yF{!} \\
\,&\yF{}\\
\,\\
\,\\
\times\\
\yF{!}
}$}
\rightarrow
\quad
\scalebox{1.1}{$\superY{
\,&\,&\,&\,&\times & + \\
\,&\,&\,&\,&\times \\
\,&\,&\,&\times&+ \\
\,&\,&\yF{!} \\
\,&\yF{}\\
\,\\
\,\\
+\\
\yF{!}
}$}.
\ee
   The resulting diagram is among those obtained from the action of $h_3\circ e_3$ on the original diagram.    If these two diagrams appeared in the sum, they would differ by a relative sign and cancel each other.
\end{proof}

We end this section with a simple example illustrating the involution $\Upsilon$.  Consider the Pieri rule $e_3s_{(2;1)}^*$.  We have to show that

\be \label{aprouver2}
\bigl(e_3- h_1 \circ e_2 + h_2 \circ e_1 - h_3 \bigr) 
  \scalebox{1.1}{$\superY{
	\, &\,& \yF{}\\
	\,  
	}$}=0
  \ee
For the first term, we have
\be\label{line0}
\begin{split}
& 
\scalebox{1.1}{$\superY{
	\, &\,& \yF{}\\
	\,  \\
	\times \\
	\times \\
	\times
	}$}
+  
\scalebox{1.1}{$\superY{
	\, &\,& \yF{}\\
	\, & \times \\
	\times \\
	\times
	}$}
+  
\scalebox{1.1}{$\superY{
	\, &\,& \times\\
	\, &  \yF{} \\
	\times \\
	\times
	}$}	
+  
\scalebox{1.1}{$\superY{
	\, &\,& \times\\
	\, &  \times \\
	\times \\
	\yF{}
	}$}	
	.
\end{split}
\ee
The second term reads
\be\label{line1}
\begin{split}
& 
-\scalebox{1.1}{$\superY{
	\, &\,& \yF{}\\
	\,  \\
	\times \\
	\times \\
	+
	}$}
-
\scalebox{1.1}{$\superY{
	\, &\,& \yF{}\\
	\, & \times \\
	\times \\
	+
	}$}
-  
\scalebox{1.1}{$\superY{
	\, &\,& \times\\
	\, &  \yF{} \\
	\times \\
	+
	}$}	
-  
\scalebox{1.1}{$\superY{
	\, &\,& \times\\
	\, &  \times \\
	+ \\
	\yF{}
	}$}	
\\
&
-	 
\scalebox{1.1}{$\superY{
	\, &\,& +\\
	\, &  \yF{} \\
	\times \\
	\times
	}$}	
-
\scalebox{1.1}{$\superY{
	\, &\,& \yF{}\\
	\, & + \\
	\times \\
	\times
	}$}
-
\scalebox{1.1}{$\superY{
	\, &\,& + \\
	\, & \times & \yF{} \\
	\times
	}$}	
-
\scalebox{1.1}{$\superY{
	\, &\,& \yF{}\\
	\, & \times \\
	\times &+ 
	}$}
-
\scalebox{1.1}{$\superY{
	\, &\,& \times & + \\
	\, &  \yF{} \\
	\times
	}$}	
-
\scalebox{1.1}{$\superY{
	\, &\,& \times \\
	\, &  + \\
	\times &  \yF{} 
	}$}	
\\
&
-
\scalebox{1.1}{$\superY{
	\, &\,& \times & + \\
	\, &  \times \\
	\yF{}
	}$}	
-
\scalebox{1.1}{$\superY{
	\, &\,& \times  \\
	\, &  \times & + \\
	\yF{}
	}$}	.
\end{split}
\ee
The third term reads
\be\label{line2}
\begin{split}
&	 
\scalebox{1.1}{$\superY{
	\, &\,& +\\
	\, &  \yF{} \\
	\times \\
	+
	}$}	
+
\scalebox{1.1}{$\superY{
	\, &\,& \yF{}\\
	\, & + \\
	\times \\
	+
	}$}
+
\scalebox{1.1}{$\superY{
	\, &\,& + \\
	\, & \times & \yF{} \\
	+
	}$}	
+
\scalebox{1.1}{$\superY{
	\, &\,& \yF{}\\
	\, & \times \\
	+ &+ 
	}$}
+
\scalebox{1.1}{$\superY{
	\, &\,& \times & + \\
	\, &  \yF{} \\
	+
	}$}	
+
\scalebox{1.1}{$\superY{
	\, &\,& \times \\
	\, &  + \\
	+ &  \yF{} 
	}$}	
\\
&
+
\scalebox{1.1}{$\superY{
	\, &\,& \times & + \\
	\, &  + \\
	\yF{}
	}$}	
+
\scalebox{1.1}{$\superY{
	\, &\,& \times  \\
	\, &  + & + \\
	\yF{}
	}$}	
\\
&
+
\scalebox{1.1}{$\superY{
	\, &\,& \times & + & +  \\
	\, &\yF{}   
	}$}	
+
\scalebox{1.1}{$\superY{
	\, &\, & + & +  \\
	\, & \times & \yF{}   
	}$}
+
\scalebox{1.1}{$\superY{
	\, &\, & + & +  \\
	\, & \yF{} \\
	\times   
	}$}
+
\scalebox{1.1}{$\superY{
	\, &\, & +   \\
	\, & + & \yF{} \\
	\times   
	}$}.	
\end{split}
\ee
Finally, the last term yields
\be \label{line3}
-\scalebox{1.1}{$\superY{
	\, &\,& + & + & +  \\
	\, &\yF{}   
	}$}	
-
\scalebox{1.1}{$\superY{
	\, &\, & + & +  \\
	\, & + & \yF{}   
	}$}
-
\scalebox{1.1}{$\superY{
	\, &\, & + & +  \\
	\, & \yF{} \\
	+   
	}$}
-
\scalebox{1.1}{$\superY{
	\, &\, & +   \\
	\, & + & \yF{} \\
	+   
	}$}.	
\ee
We see that the four diagrams in \eqref{line0} cancel the first four diagrams
in \eqref{line1}. The eight diagrams in the second and third lines of \eqref{line1} cancel the eight diagrams in the first two lines of \eqref{line2}.
Finally, the four diagrams in the third line of \eqref{line2} cancel the four diagrams of \eqref{line3}.  Thus, assuming that the Pieri rules for $e_1$ and $e_2$ hold, we have that the $e_3$ rule is exactly the one needed in order for \eqref{aprouver2} to be satisfied. 

\medskip

\subsection{The action $\partial_{\tilde{e}_0}$ on Schur functions}
We have seen that $B_n^{(0)}$ and $C_n^{(1)}$ are related to
 $B_n^{(1)}$ and $C_n^{(0)}$ in the following way
\be
B_n^{(0)} = \partial_{\tilde{e}_0} B_n^{(1)}, \qquad C_n^{(1)} = \partial_{\tilde{e}_0}^\perp C_n^{(0)} \, .
\ee
In the $B_n^{(0)}$  case, we thus need the rule for the action of $\partial_{\tilde{e}_0}$ on the super-Schur function $s_\La$. But by duality, 
 this is equivalent to the action of $\partial_{\tilde{e}_0}^\perp$ on $s_\La^*$ needed in the $C_n^{(1)}$  case.   
 \medskip

The operator $\partial_{\tilde{e}_0}$ turns out to have a very simple action on
the $s_\Lambda$'s. 
\begin{lemma} \label{Lemlemlemdeleo}
We have 
\be \label{lemdele0se0}
\partial_{\tilde{e}_0} s_\La = 0 {\rm ~~if~} \La_1^* = \La_1^\circledast
\qquad  {\rm while}  \qquad 
\partial_{\tilde{e}_0} s_\Lambda= s_\Omega {\rm ~~if~} \La_1^* \neq \La_1^\circledast 
\ee
where $\Omega = (\La_2^{\rm a}, \ldots; \La_1^{\rm a}, \La_1^{\rm s}, \ldots)$, that is, where $\Omega$ is the superpartition obtained by removing the circle in the first row of the diagram of $\Lambda$.  
\end{lemma}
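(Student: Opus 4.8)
The plan is to derive Lemma~\ref{Lemlemlemdeleo} directly from the creation-operator description of $s_\La$ in Proposition~\ref{prop_typeI}, together with two elementary facts about $\partial_{\tilde e_0}$. First, $\partial_{\tilde e_0}$ is an \emph{odd} derivation of $\mathcal A_{\mathbb Q}$: by \eqref{deriv} one has $\partial_{\tilde e_0}=\sum_{s\ge 0}h_s\,\partial_{\ta_{s+1}}$, an $A_{\mathbb Q}$-linear combination of the odd derivations $\partial_{\ta_{s+1}}$ with central coefficients $h_s$. Since the $h_s$ commute with each $\partial_{\ta_{t+1}}$, we get $\partial_{\tilde e_0}^2=\sum_{s,t\ge 0}h_sh_t\,\partial_{\ta_{s+1}}\partial_{\ta_{t+1}}=0$, the coefficient $h_sh_t$ being symmetric in $(s,t)$ while $\partial_{\ta_{s+1}}\partial_{\ta_{t+1}}$ is antisymmetric. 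Second, the defining relation $B_n^{(0)}=\partial_{\tilde e_0}\circ B_n^{(1)}$ of \eqref{B0vsB1} combined with $\partial_{\tilde e_0}^2=0$ yields the operator identity $\partial_{\tilde e_0}\circ B_n^{(0)}=0$ for all $n\in\mathbb Z$.

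Next I would write, using Proposition~\ref{prop_typeI}, $s_\La=B_{\La^*_1}^{(\epsilon_1)}\cdot f$ with $f=B_{\La^*_2}^{(\epsilon_2)}\cdots B_{\La^*_N}^{(\epsilon_N)}\cdot 1$; by the same proposition applied to the superpartition $\hat\La$ obtained from $\La$ by deleting its top row (whose associated sequences are $(\La^*_2,\ldots,\La^*_N)$ and $(\La^\circledast_2,\ldots,\La^\circledast_N)$), one has $f=s_{\hat\La}$. If the first row of $\La$ is bosonic then $\La_1^*=\La_1^\circledast$ and $\epsilon_1=0$, so $s_\La=B_{\La^*_1}^{(0)}s_{\hat\La}$ and hence $\partial_{\tilde e_0}s_\La=(\partial_{\tilde e_0}\circ B_{\La^*_1}^{(0)})s_{\hat\La}=0$, which is the first assertion of the lemma. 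If the first row is fermionic then $\La_1^*\neq\La_1^\circledast$, $\epsilon_1=1$, and $s_\La=B_{\La^*_1}^{(1)}s_{\hat\La}$; applying $\partial_{\tilde e_0}$ and using $B_n^{(0)}=\partial_{\tilde e_0}\circ B_n^{(1)}$ once more gives $\partial_{\tilde e_0}s_\La=B_{\La^*_1}^{(0)}s_{\hat\La}$.

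It then remains to identify $B_{\La^*_1}^{(0)}s_{\hat\La}$. I would invoke the type~I analogue of \eqref{lemtypeIea} --- the "even operator adds exactly one row" statement contained in the proof of Proposition~\ref{prop_typeI} in Appendix~\ref{Sproof_typeI} --- namely that $B_n^{(0)}s_{\hat\La}=s_\Om$, with $\Om^*=(n,\hat\La^*)$ and $\Om^\circledast=(n,\hat\La^\circledast)$, whenever $n\ge\hat\La_1^\circledast$. This applies with $n=\La_1^*$: since $\La_1^\circledast\ge\La_2^\circledast=\hat\La_1^\circledast$ and $\La_1^\circledast=\La_1^*+1$ (the first row being fermionic), failure of $\La_1^*\ge\hat\La_1^\circledast$ would force $\La_2^\circledast=\La_1^*+1$, i.e. the second row would also carry a circle with $\La_2^{\rm a}=\La_1^*=\La_1^{\rm a}$, contradicting the strict decrease of $\La^{\rm a}$. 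Hence $\partial_{\tilde e_0}s_\La=s_\Om$ with $\Om^*=(\La_1^*,\hat\La^*)$ and $\Om^\circledast=(\La_1^*,\hat\La^\circledast)$; diagrammatically $\Om$ is $\hat\La$ with an extra bosonic row of length $\La_1^*=\La_1^{\rm a}$ placed on top, which is exactly $\La$ with the circle of its first row erased, i.e.\ the superpartition $(\La_2^{\rm a},\ldots;\La_1^{\rm a},\La_1^{\rm s},\ldots)$ of the statement. (Alternatively, one can skip the "add a row" lemma and obtain this last equality by applying Proposition~\ref{prop_typeI} directly to $\Om$, whose top row is bosonic of length $\La_1^*$ and whose lower rows form $\hat\La$, so that $s_\Om=B^{(0)}_{\La_1^*}s_{\hat\La}$.) The computation is short throughout; the only delicate point --- and the one I expect to need the most care --- is the verification that erasing the first-row circle of $\La$ yields a genuine superpartition, equivalently the inequality $\La_1^*\ge\hat\La_1^\circledast$ that licenses the single-term "add a row" behaviour of $B_{\La_1^*}^{(0)}$.
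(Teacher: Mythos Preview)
Your argument is algebraically clean, but it is circular within the paper's logical development. You invoke Proposition~\ref{prop_typeI} to write $s_\La=B^{(\epsilon_1)}_{\La_1^*}s_{\hat\La}$ and, in the fermionic case, the $\epsilon=0$ ``add a row'' identity $B^{(0)}_{\La_1^*}s_{\hat\La}=s_\Om$ from Appendix~\ref{Sproof_typeI}. But look at how the paper proves that $\epsilon=0$ identity: it writes $B_n^{(0)}s_{\hat\La}=\partial_{\tilde e_0}\circ B_n^{(1)}s_{\hat\La}=\partial_{\tilde e_0}s_\Delta$ with $\Delta^*=(n,\hat\La^*)$, $\Delta^\circledast=(n+1,\hat\La^\circledast)$, and then \emph{applies Lemma~\ref{Lemlemlemdeleo} to $\Delta$}. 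In your fermionic case, $n=\La_1^*$ and one checks $\Delta=\La$ exactly, so your argument reduces ``Lemma~\ref{Lemlemlemdeleo} for $\La$'' to ``Lemma~\ref{Lemlemlemdeleo} for $\La$''. The bosonic case is no better: to even assert $s_\La=B^{(0)}_{\La_1^*}s_{\hat\La}$ you need the $\epsilon=0$ add-a-row statement, hence again Lemma~\ref{Lemlemlemdeleo} (for a superpartition of the same total degree and length as $\La$). Your alternative of applying Proposition~\ref{prop_typeI} directly to $\Om$ has the same problem, since $\Om$ has bosonic first row.

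This is why the paper proves Lemma~\ref{Lemlemlemdeleo} \emph{first} and independently --- by a triple induction (fermionic degree, total degree, reverse lexicographic order) using only the Pieri rules \eqref{Pierithetar} and \eqref{Pierihset123} together with the expansion \eqref{deriv} --- and only afterwards uses it to close the $\epsilon=0$ case of Proposition~\ref{prop_typeI}. If you had an independent proof of Proposition~\ref{prop_typeI} (say from a different characterisation of $s_\La$), your derivation would be a perfectly valid and elegant shortcut; but as the paper stands, Proposition~\ref{prop_typeI} is downstream of the lemma you are trying to prove, and you cannot cite it here.
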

\begin{proof}
  Define the lexicographic order on superpartitions of the same fermionic and total degrees 
  to be such that
  $\Lambda >_l \Omega$ if 
  the smallest $i$ in which row $i$ of the diagrams
  of $\Lambda$ and $\Omega$ differ is such that
  $\Lambda_i^* > \Om_i^* $ or $\La_i^\circledast  > \Om_i^\circledast$.
  The proof proceeds by induction on the fermionic degree, the total degree and the reverse lexicographic order.
 The proof will make use of 
\eqref{rel_app_del_et12} in the case $m=0$:
  \be \label{eqe0}
\partial_{\tilde e_0} = \sum_{r \geq 0} h_r \partial_{\theta_{r+1}}. 
\ee

  The result obviously holds for superpartitions $\Lambda$ of fermionic degree 0.  Now, fix a given fermionic degree ${\rm m}>0$ and a given total degree $d$. 
  Suppose that $\Lambda$ is the largest superpartition (in lexicographic order) of fermionic degree ${\rm m}$ and total degree $d$, that is, $\Lambda$ is of the form $\La=(\ell,\m-2,\m-3,\dots,1,0;\,)$,
  where $\ell=d-({\rm m}-1)({\rm m}-2)/2$.   We will show that the result holds in this case, that is, that $\partial_{\tilde{e}_0} s_\La= s_\Omega$ where
  $\Omega=(\m-2,\m-3,\dots,1,0;\ell)$.  
  There are two cases to consider: $\m=1$ and $\m >1$.  If $\m=1$, we have that $s_\Lambda=\theta_{d+1}$ (see Table \ref{tab1} for instance) which, by \eqref{eqe0}, leads to $\partial_{\tilde{e}_0} s_\La= h_d=s_\Omega$. 
On the other hand, if $\m>1$ then
$\tilde e_0 s_\Omega=s_\Lambda$ by the Pieri rule \eqref{Pierithetar} (recall that $\tilde{e}_0= \ta_1$) since the new circle can only be put in the first row.  This implies that
  \be
  \partial_{\tilde{e}_0} s_\La= \partial_{\tilde{e}_0} \tilde e_0 s_\Omega =
  s_\Omega - \tilde e_0 \partial_{\tilde{e}_0}  s_\Omega = s_\Omega
  \ee
since $\partial_{\tilde{e}_0}  s_\Omega=0$ by induction on the fermionic degree.  Thus $\partial_{\tilde{e}_0} s_\La= s_\Omega$ when $\Lambda$ is the largest superpartition in lexicographic order for fixed fermionic and total degrees.

 For a given fermionic degree ${\rm m}$,  the smallest possible total degree
  is $d=\m(\m-1)/2$ with  $\Lambda=(\m-1,\m-2\dots,1,0;\,)$ being the only superpartition of that degree.  Since this superpartition is also the largest one of its degree in lexicographic order, the result holds for the lowest total degree since we have already proven that it holds for the largest superpartition in lexicographic order for fixed fermionic and total degrees.

  We can now proceed to the final (but by far most technical) step of the proof.  We will show that the result holds for an arbitrary superpartition $\Lambda$ of fermionic degree ${\mathrm m}$
  and total degree $d$.  
  By induction, we can suppose that the result holds for every superpartition 
  \begin{enumerate}
  \item[$\bullet$]  of fermionic degree smaller than ${\rm m}$;
  \item[$\bullet$]  of total degree smaller than $d$;
  \item[$\bullet$] or larger than $\Lambda$ with respect to the lexicographic order.
  \end{enumerate}
  We will treat separately the two possible cases, i.e.~$\Lambda_1^* \neq \Lambda_1^\circledast$,  and $\Lambda_1^* = \Lambda_1^\circledast$.  
  First consider the case where $\Lambda$ is such that $\Lambda_1^* \neq \Lambda_1^\circledast$.
 Let $\tilde{\Lambda}$ be the superpartition whose diagram is that of $\Lambda$ without its first part and set $\ell=\Lambda_1^\circledast$.   Using the Pieri rule, we have
  \be \label{pieri1}
  \theta_\ell s_{\tilde \Lambda} = s_{\Lambda} + \sum_{\Gamma \neq \Lambda} s_{\Gamma}
  \ee
  where the superpartitions $\Gamma$ appearing in the sum are given by the  Pieri rule \eqref{Pierithetar}, and are such that $\Gamma_1^\circledast > \Lambda_1^\circledast =\ell$ since $\Gamma^\circledast/\tilde \Lambda^\circledast$ is a vertical $\ell$-strip and $\Gamma \neq \Lambda$ (given that $\tilde \Lambda_1^\circledast < \ell$, there is a unique way to add $\ell$ boxes in the first $\ell$ columns and it gives rise to $\Lambda$). This is illustrated in the following example where the diagrams appearing in the expansion of  $\theta_4$ acting on  $s_{(1;3)}$ 
  \be
\scalebox{1.1}{$\superY{
\, & \, &\, & \times & \times &  \times & \yF{\times}\\
\,& \yF{}
}$}\quad \scalebox{1.1}{$\superY{
\, & \, &\, & \times &  \times & \yF{\times}\\
\,& \times &\yF{}
  }$}
\quad \scalebox{1.1}{$\superY{
\, & \, &\, & \times &  \times & \yF{\times}\\
\, &\yF{} \\
\times
  }$}
\quad \scalebox{1.1}{$\superY{
\, & \, &\,   &  \times & \yF{\times}\\
\, & \times &\yF{} \\
\times
  }$}
\quad \scalebox{1.1}{$\superY{
\, & \, &\,   &  \times & \yF{\times}\\
\, & \times  \\
\times & \yF{}
  }$}
\quad \scalebox{1.1}{$\superY{
\, & \, &\,    & \yF{\times}\\
\, & \times & \times  \\
\times & \yF{}
  }$}
\ee
all have a new circle at the end of the first row and the only one whose first row is of length 4 (including the circle) is the last diagram, $\Lambda=(3,1;3)$.
Hence all $\Gamma$'s that appear in the sum in \eqref{pieri1} are
larger than $\Lambda$ in lexicographic order.  Note that the sign is always positive because the new circle needs to be at the end of the first row of $\Gamma$
(the new circle is always the rightmost among the cells of $\Gamma^\circledast/\tilde \Lambda^\circledast$ as can be seen in the previous example).
Therefore, we write
 \be
 s_{\Lambda} =  \theta_\ell s_{\tilde \Lambda} - \sum_{\Gamma >_l \Lambda}  s_{\Gamma}
  \ee
 where (recall that) the sum is over superpartitions from the Pieri rule \eqref{Pierithetar} and such that $\Gamma >_l \Lambda$.  
  This gives, from \eqref{eqe0}, that
   \be \label{Beq1}
\partial_{\tilde e_0}  s_{\Lambda} =  h_{\ell-1} s_{\tilde \Lambda}-\theta_\ell \partial_{\tilde e_0}  s_{\tilde \Lambda}  - \sum_{\Gamma >_l \Lambda }  \partial_{\tilde e_0} s_{\Gamma} =    h_{\ell-1} s_{\tilde \Lambda}-\theta_\ell \partial_{\tilde e_0}  s_{\tilde \Lambda}  - \sum_{\Gamma >_l \Lambda }  s_{\hat \Gamma}
\ee
where $\hat \Gamma$ is $\Gamma$ without the circle in its first row (we used the fact that by induction on reverse lexicographic order
$\partial_{\tilde e_0}  s_{\Gamma}=s_{\hat \Gamma}$).  
Because of the term $\partial_{\tilde e_0}  s_{\tilde \Lambda}$ in the previous equation, 
we need at this point to consider the two cases   $\tilde \Lambda_1^*\neq \tilde \Lambda_1^\circledast$ and $\tilde \Lambda_1^* = \tilde \Lambda_1^\circledast$ separately.  Suppose that  $\tilde \Lambda_1^*\neq \tilde \Lambda_1^\circledast$.  By induction on the fermionic degree, we have  $\partial_{\tilde e_0}  s_{\tilde \Lambda}= s_{\tilde \Omega}$ where
${\tilde \Omega}$ is $\tilde \Lambda$ without its circle in the first row,  and equation \eqref{Beq1} becomes 
  \be \label{b24}
  \partial_{\tilde e_0}  s_{\Lambda}
  = h_{\ell-1} s_{\tilde \Lambda}-\theta_\ell  s_{\tilde \Omega}  - \sum_{\Gamma >_l \Lambda }  s_{\hat \Gamma}
  \ee
  Now, $h_{\ell-1} s_{\tilde \Lambda}$ can give rise to $s_\Delta$ where $\Delta$ has
  a circle in the first row or not.  We will show that each $\Delta$ that has a circle in the first row will be canceled out by an equal term coming from $\ta_\ell s_{\tilde \Omega}$; each  $\Delta$ that has no circle in the first row will be canceled  out by an equal term coming from $ \sum_{\Gamma >_l \Lambda }  s_{\hat \Gamma}$, except the desired case $\Delta=\Omega$.  
  (Recall that there is no superpartition $\Gamma$ in the summation such that $\hat \Gamma=\Omega$ since this would imply that  $\Gamma$ must be $\La$ itself, hence a contradiction.)  
  Consider for instance the case where $h_3$ acts on $s_{(2;1)}$ which gives rise to the following diagrams:
  \be \label{example2}
\scalebox{1.1}{$\superY{
\, & \,  & \times & \times &  \times & \yF{}\\
\,
}$}\quad \scalebox{1.1}{$\superY{
\, & \, & \times &  \times & \yF{}\\
\,& \times 
  }$}
\quad \scalebox{1.1}{$\superY{
\,  &\, & \times &  \times & \yF{}\\
\,  \\
\times
  }$}
\quad \scalebox{1.1}{$\superY{
\,  &\,   &  \times & \yF{}\\
\, & \times  \\
\times
  }$}
\quad \scalebox{1.1}{$\superY{
\, & \, &\times   &  \times \\
\, & \times & \yF{}
  }$}
\quad \scalebox{1.1}{$\superY{
\, & \, &    \times\\
\, & \times & \yF{}  \\
\times 
  }$}
\ee
  If $\Delta$ has a circle in the first row, then
  it corresponds to one of the terms appearing in $\theta_\ell  s_{\tilde \Omega}$,
  which is the case in our example where the first 4 terms correspond to
  $\theta_4 \,  s_{(\,;2,1)}= s_{(5;1)}+s_{(4;2)}+s_{(4;1,1)}+s_{(3:2,1)}$.
This is seen in the following way:
since the new circle in  $\theta_\ell  s_{\tilde \Omega}$
needs to appear at the end of the first row by the argument given previously,  acting with $\theta_\ell$  amounts to acting with $h_{\ell-1}$  and then adding a circle at the end of the first row, which is exactly the same as acting with $h_{\ell-1}$ on  $s_{\tilde \Lambda}$ in the case where the circle in the first row is pushed along its row.  
If $\Delta$ has no circle in the first row then the circle in the first row has been pushed to the second row (see the last two diagrams in the example).  But this amounts to acting with $\theta_\ell$ on $s_{\tilde \Lambda}$
and then removing the circle in the first row which, except for $\Delta =\Omega$, are exactly the terms appearing in the sum in \eqref{b24} (compare the last two diagrams in the previous example with the superpartitions obtained after removing the circle in the first row of the superpartitions appearing in the expansion  
$\theta_4 \,  s_{(2;1)}= s_{(4,2;\,)}+s_{(3,2;1)}$).
The only term remaining on the right-hand side of \eqref{b24} is thus $s_\Omega$ which means that $ \partial_{\tilde e_0}  s_{\Lambda} = s_\Omega$ holds in that case.  

 Going back to the equation \eqref{Beq1}, we now suppose the case $\tilde \Lambda_1^*= \tilde \Lambda_1^\circledast$. 
  By induction on the fermionic degree, we then have that 
$\partial_{\tilde e_0}  s_{\tilde \Lambda}= 0$,  and thus
 \be 
  \partial_{\tilde e_0}  s_{\Lambda}
  = h_{\ell-1} s_{\tilde \Lambda} - \sum_{\Gamma >_l \Lambda }  s_{\hat \Gamma}.
  \ee
  But again  acting with $h_{\ell-1}$ on  $s_{\tilde \Lambda}$
  amounts to acting with $\theta_\ell$  on  $s_{\tilde \Lambda}$
  and then removing the circle in the first row, which are exactly the terms appearing in the sum on the right-hand side,  except for $\hat \Gamma =\Omega$ (as explained above).   This proves again that
$ \partial_{\tilde e_0}  s_{\Lambda} = s_\Omega$  in that case.

\medskip 
We now turn to the other possible case, namely when $\Lambda$ is such that $\Lambda_1^* = \Lambda_1^\circledast$.   This case is similar to the previous one but somewhat easier.  We let  once more $\tilde \Lambda$ be the superpartition whose diagram is that of $\Lambda$ without its first part. Setting $\Lambda_1^\circledast= \ell$, we observe from the Pieri rule that
  \be \label{pieri2} 
  h_\ell s_{\tilde \Lambda} = s_{\Lambda} + \sum_{\Gamma \neq \Lambda} s_{\Gamma}
  \ee
  where the superpartitions $\Gamma$ appearing in the sum are all such that $\Gamma_1^\circledast > \Lambda_1^\circledast =\ell$ since $\Gamma^\circledast/\tilde \Lambda^\circledast$ is a vertical $\ell$-strip and $\Gamma \neq \Lambda$ (given that $\tilde \Lambda_1^\circledast \leq \ell$, there is again a unique way to add $\ell$ boxes in the first $\ell$ columns and it also gives rise to $\Lambda$.
  This is illustrated with $h_3$ acting on $s_{(2;1)}$ in \eqref{example2} where
all the diagrams have a first row of length larger than 4 except 
$\Lambda=(2;3,1)$).
Therefore
 \be \label{pieri3}
 \partial_{\tilde e_0} s_{\Lambda}=   h_\ell  \partial_{\tilde e_0} s_{\tilde \Lambda} - \sum_{\Gamma >_l \Lambda}  \partial_{\tilde e_0} s_{\Gamma}
 \ee
since from our previous observation all $\Gamma$'s that appear in the sum are
larger than $\Lambda$ in lexicographic order.
We  need as before to consider the two cases   $\tilde \Lambda_1^*\neq \tilde \Lambda_1^\circledast$ and $\tilde \Lambda_1^* = \tilde \Lambda_1^\circledast$ separately.  
First 
suppose that  $\tilde \Lambda_1^*\neq \tilde \Lambda_1^\circledast$.  By induction on the total degree, we have  $\partial_{\tilde e_0}  s_{\tilde \Lambda}= s_{\tilde \Omega}$, where
${\tilde \Omega}$ is $\tilde \Lambda$ without its circle in the first row.
By induction again, but this time on the reverse lexicographic order, we obtain $\sum_{\Gamma >_l \Lambda}  \partial_{\tilde e_0} s_{\Gamma}= \sum_{\Delta } s_{\hat \Delta}$, where the sum is over all the superpartition $\Delta$'s such that  $s_\Delta$  appears  in the product $h_\ell s_{\tilde \Lambda}$ and that $\Delta$ has a circle in the first row (otherwise it gets killed by $\partial_{\tilde e_0}$ by induction).  The superpartition $\hat \Delta$ denotes $\Delta$ without the circle in its first row.  
Using again $h_3$ acting on $s_{(2;1)}$ as an example, the $\Delta$'s are the first four diagrams in \eqref{example2} and $\hat \Delta$ are those same four diagrams but without the circle in the first row.
Hence,
 \be 
 \partial_{\tilde e_0} s_{\Lambda}=   h_\ell  s_{\tilde \Omega} - \sum_{\Delta}  s_{\hat \Delta} .
  \ee
  But, as we will see, we have precisely $h_\ell  s_{\tilde \Omega} = \sum_{\Delta}  s_{\hat \Delta}$.  
  One can compare for instance
  $h_3 s_{(\,;2,1)}=s_{(\,;5,1)} + s_{(\,;4,2)}+ s_{(\,;4,1,1)} + s_{(\,;3,2,1)}$ with the first four diagrams in \eqref{example2} but without the circle in the first row.  This is because
  acting with
  $h_\ell$ on $s_{\tilde \Omega}$ is equivalent to adding a circle in the first row of
  $\tilde \Omega$ to obtain $\tilde \Lambda$, then acting with $h_\ell$ on $s_{\tilde \Lambda}$ to get the $s_\Gamma$'s, then considering only the terms in the expansion whose circle is pushed along the first row (the $s_\Delta$'s)
  and then removing that same exact circle in the first row to get the $s_{\hat \Delta}$'s (which is the same as never having considered the circle in the
  first row of $\tilde \Lambda$).  Thus $\partial_{\tilde e_0} s_{\Lambda}=0$ holds in that case.

  Suppose finally that $\tilde \Lambda_1^*=\tilde \Lambda_1^\circledast$. In that case, the $\Gamma$'s appearing
 in the sum in \eqref{pieri2} are all such that $\Gamma_1^*=\Gamma_1^\circledast$ since $h_\ell$ does not add a new circle and
 $\tilde \Lambda$ does not have a circle in its first row that could be pushed along the first row.  Therefore, as wanted, \eqref{pieri3} leads to
\be 
 \partial_{\tilde e_0} s_{\Lambda}=   h_\ell  \partial_{\tilde e_0} s_{\tilde \Lambda} - \sum_{\Gamma >_l \Lambda}  \partial_{\tilde e_0} s_{\Gamma}=0
  \ee
  since $\partial_{\tilde e_0} s_{\tilde \Lambda}=0$ and  $\partial_{\tilde e_0} s_{\Gamma}=0$ by induction on the total degree and the reverse lexicographic order respectively.  
 This complete the proof.  
   \end{proof}

\begin{corollary}\label{corde0tiladj_1}
We have 
\be \label{lemdele0se0_adj1}
\partial_{\tilde{e}_0}^\perp s_\La^* = 0 {\rm ~~if~} \La_1^* \neq \La_1^\circledast
\qquad  {\rm while}  \qquad 
\partial_{\tilde{e}_0}^\perp s_\Lambda^*= s_\Omega^* {\rm ~~if~} \La_1^* = \La_1^\circledast 
\ee
where $\Omega = (\La_1^{\rm s}, \La_1^{\rm a}, \ldots;  \La_2^{\rm s}, \ldots)$, that is, where $\Omega$ is the superpartition obtained by adding a circle in the first row of the diagram of $\Lambda$.  
\end{corollary}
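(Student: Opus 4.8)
The plan is to deduce the statement from Lemma~\ref{Lemlemlemdeleo} by a pure duality argument, exactly as anticipated in the paragraph preceding that lemma. Since $\{s_\Gamma^{\phantom *}\}$ and $\{s_\Gamma^*\}$ are dual bases for the scalar product \eqref{orthops_ssesbsbe}, every $g\in\mathcal A_{\mathbb Q}$ satisfies $g=\sum_\Gamma \langle s_\Gamma^{\phantom *},g\rangle\, s_\Gamma^*$. Applying this to $g=\partial_{\tilde e_0}^\perp s_\La^*$ and using the definition \eqref{defadj} of the adjoint map, I would write
\[
\partial_{\tilde e_0}^\perp s_\La^* \;=\; \sum_\Gamma \langle s_\Gamma^{\phantom *},\,\partial_{\tilde e_0}^\perp s_\La^*\rangle\, s_\Gamma^* \;=\; \sum_\Gamma \langle \partial_{\tilde e_0}s_\Gamma^{\phantom *},\,s_\La^*\rangle\, s_\Gamma^*.
\]
By Lemma~\ref{Lemlemlemdeleo}, the coefficient $\langle \partial_{\tilde e_0}s_\Gamma,s_\La^*\rangle$ vanishes whenever $\Gamma$ has a bosonic first row ($\Gamma_1^*=\Gamma_1^\circledast$); if $\Gamma$ has a fermionic first row then $\partial_{\tilde e_0}s_\Gamma=s_{\hat\Gamma}$, where $\hat\Gamma=(\Gamma_2^{\rm a},\ldots;\Gamma_1^{\rm a},\Gamma_1^{\rm s},\ldots)$ denotes $\Gamma$ with the circle of its first row deleted, and then the coefficient equals $\langle s_{\hat\Gamma},s_\La^*\rangle=\delta_{\hat\Gamma,\La}$. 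Thus
\[
\partial_{\tilde e_0}^\perp s_\La^* \;=\; \sum_{\substack{\Gamma\,:\,\Gamma_1^*\neq\Gamma_1^\circledast\\ \hat\Gamma=\La}} s_\Gamma^*,
\]
and the corollary is reduced to the combinatorial claim that this index set is empty when $\La_1^*\neq\La_1^\circledast$, and equals the singleton $\{\Omega\}$ with $\Omega=(\La_1^{\rm s},\La_1^{\rm a},\ldots;\La_2^{\rm s},\ldots)$ when $\La_1^*=\La_1^\circledast$.

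For that combinatorial step I would first observe that the operation $\Gamma\mapsto\hat\Gamma$ always produces a superpartition whose first row is \emph{bosonic}. Indeed, if $\Gamma$ has a fermionic first row then, by the convention placing a bosonic row before a fermionic row of the same $\circledast$-length, every bosonic row of $\Gamma$ has $\circledast$-length strictly smaller than that of row~$1$, so $\Gamma_1^{\rm s}\le\Gamma_1^{\rm a}$; deleting the circle of row~$1$ turns it into a bosonic row of length $\Gamma_1^{\rm a}$, which dominates every remaining row of $\Gamma$ and therefore stays first. This already gives the first assertion of \eqref{lemdele0se0_adj1}: if $\La_1^*\neq\La_1^\circledast$ there is no $\Gamma$ with $\hat\Gamma=\La$, whence $\partial_{\tilde e_0}^\perp s_\La^*=0$. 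When $\La$ has a bosonic first row of length $\ell=\La_1^{\rm s}$, matching bosonic parts in $\hat\Gamma=(\Gamma_2^{\rm a},\ldots;\Gamma_1^{\rm a},\Gamma_1^{\rm s},\ldots)=\La$ forces $\Gamma_1^{\rm a}=\ell$ (it is the largest bosonic part of $\hat\Gamma$), and then matching the remaining bosonic and fermionic parts forces $\Gamma^{\rm a}=(\ell,\La_1^{\rm a},\La_2^{\rm a},\ldots)$ and $\Gamma^{\rm s}=(\La_2^{\rm s},\La_3^{\rm s},\ldots)$; this is precisely $\Omega$. Since $\La_1^{\rm a}\le\ell-1<\ell$ (itself a consequence of $\La_1^*=\La_1^\circledast$), $\Omega$ is a genuine superpartition with a fermionic first row, and a direct check shows $\hat\Omega=\La$. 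In other words, $\La\mapsto\Omega$ is the inverse of $\Gamma\mapsto\hat\Gamma$ on the set of superpartitions with a bosonic first row, giving the second assertion of \eqref{lemdele0se0_adj1}.

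I expect the main (though modest) obstacle to be the bookkeeping in this last step: one must be careful with the convention ordering bosonic and fermionic rows of equal $\circledast$-length, and with the borderline case $\La_1^{\rm a}+1=\La_1^{\rm s}$, to be sure both that $\hat\Gamma$ is exactly the superpartition described in Lemma~\ref{Lemlemlemdeleo} and that the reconstructed $\Omega$ is unique and admissible. Everything else is formal. As a byproduct the computation records that $\partial_{\tilde e_0}^\perp$ raises the fermionic degree by one ($\mathrm m(\Omega)=\mathrm m(\La)+1$), consistent with its role in $C_n^{(1)}=\partial_{\tilde e_0}^\perp\circ C_n^{(0)}$.
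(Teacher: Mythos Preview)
Your proof is correct and follows exactly the same duality argument as the paper's own proof: expand $\partial_{\tilde e_0}^\perp s_\La^*$ in the basis $\{s_\Gamma^*\}$, compute the coefficients via $\langle \partial_{\tilde e_0} s_\Gamma, s_\La^*\rangle$ using Lemma~\ref{Lemlemlemdeleo}, and identify the unique $\Gamma$ (if any) with $\hat\Gamma=\La$. Your write-up is in fact more thorough than the paper's, which states the duality argument in two lines and leaves the combinatorial verification (that $\hat\Gamma$ always has a bosonic first row, and that $\La\mapsto\Omega$ inverts $\Gamma\mapsto\hat\Gamma$) entirely to the reader.
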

\begin{proof} This is a direct consequence of the duality between the $s_\La^{\phantom *}$ and the $s^*_\La$ bases. Suppose we want to find which coefficients $u_{\La;\Om}$ in $\partial_{\tilde{e}_0}^\perp s_\La^* = \sum_\Om u_{\La;\Om} s^*_\Om$ are non-zero.  Using the duality and Lemma \eqref{lemdele0se0}, we find
$u_{\La;\Om} = 
\delta_{\La, \hat{\Om}} 
$ whenever  $ \Om_1^* \neq \Om_1^\circledast $, and
$u_{\La;\Om} =0$  whenever $ \Om_1^* = \Om_1^\circledast $, where $\hat{\Om}$ denotes the superpartition $\Om$ without its circle in the first row.   
\end{proof}

\section{Proof of Proposition \ref{prop_typeI}}
\label{Sproof_typeI}

In this appendix, we provide the proof of Proposition \ref{prop_typeI} for the super Bernstein operators associated to super-Schur  of type I.  The proof is almost identical to the one of Section~\ref{Sproof_typeIe}, the only difference  being the type of Pieri rules involved.    

\medskip

First observe that when there are no fermionic parts, the expression
\be
B_{k_1}^{(0)} \ldots B_{k_n}^{(0)} \cdot 1
\ee
gives the usual Schur function $s_\la$ associated to the partition $\la=(k_1, \ldots, k_n)$.  Indeed, we see from the expression \eqref{Bm0} that  the second term of each  $B_k^{(0)}$ vanishes when acting on an expression which contains no odd variables.   
As a result, each of the $B_k^{(0)}$ becomes precisely a usual Bernstein operator.  On the other hand, if there is a single fermionic part, the proposition is also verified:
\be
B_k^{(1)} \cdot 1 = \sum_{r\geq 0} (-1)^r \ta_{k+r+1}^{\phantom \perp} \circ e_r^\perp \cdot 1 = \ta_{k+1}
\ee
since only the term $r=0$ contributes, and by definition $s_{(k;)}=\ta_{k+1}$.  

\medskip

Proving Proposition \ref{prop_typeI} amounts to showing that the action of a  mode $B_n^{(\epsilon)}$ on a super-Schur $s_\La$, with $n\geq \La_1^\circledast$  gives   a single super-Schur. The two cases, corresponding to the two values of $\epsilon$, must be treated separately.

\medskip

We consider first the case where $\epsilon=1$.
From the  Schur  superpolynomial labeled by a single-part superpartition, either fermionic or bosonic, we can prove inductively that
\be\label{lemreltypeIabc}
B_n^{(1)}s_\La = s_{\Om} \qquad \text{if} \qquad \Om^* = (n, \La^*) \quad \text{and}\quad \Om^\circledast=(n+1, \La^\circledast)
\ee
using a combinatorial argument constructed from the same involution as in Section \ref{Sproof_typeIe}.  

\begin{proof}
We keep the same representation as in Section \ref{Sproof_typeIe} for the diagrams that appear in the expansion of $B_n^{(1)}s_\La$,
\be\label{l_expan_BunmsurS}
B_n^{(1)}s_\La =  \sum_{r\geq 0} (-1)^r \ta_{n+r+1}^{\phantom \perp} \circ e_r^\perp \, s_\La.
\ee
In particular, boxes that are removed from $\La$ using the inverse of the Pieri rule of type $e_rs^*_\La$ form a vertical $r$-strip and are marked with the symbol $\times$.  
The boxes and the circle that are then added to the resulting diagrams using the Pieri rule of type $\ta_{n+r+1} s_\La$ form a fermionic horizontal $(n+r)$-strip, and are marked with the symbol $+$.  In addition, we mark with the symbol ``!'' a circle that has been displaced  from its original row (a circle that is moved only horizontally, that is, which remains in the same row, is not marked with  a ``!'').  A box that has been removed in the first step and then added in the second step, is marked with the symbol $\PlusX$.  Here is an example of this representation:
\be\label{ex1_pour_B_proof}
\ta_{6}^{\phantom \perp}\circ e_1^\perp \cdot
	\scalebox{1.1}{$\superY{
	\,&\,&\,& \\
	\,&\,&\yF{} \\
	\,& \\
	\yF{}
	}$}
\supset
	\scalebox{1.1}{$\superY{
	\,&\,&\,&\PlusX&+&+& \yF{+}  \\
	\,&\,&+&\yF{} \\
	\,&\, \\
	+ \\
	\yF{!}
	}$}
\ee

Now, for each (decorated) diagram that appears in  the expansion \eqref{l_expan_BunmsurS}, we identify a  \emph{transmutable} box (there is at most one such box), which is defined as follows:  reading  the diagram  from top right to bottom left, a transmutable box is the first box that satisfies the following conditions:
\begin{itemize}
\item[1.]  it is an empty box or a $\PlusX$box;
\item[2.] it is a box that belongs to the set of removable boxes (of the original diagram);
\item[3.] it does not lie above a $+$box or above an empty box;
\item[4.] there is no unmarked (empty) circle to its right.
\end{itemize}
Note that this definition of transmutable box is the same as before even though the Bernstein operators act differently.    
An empty transmutable box is marked with a $\bullet$.  There is a unique diagram in the expansion of the terms stemming from  \eqref{l_expan_BunmsurS} which contains no transmutable box. 
  It comes from the $r=0$ sector, where one $+$box is added to every column, and the added circle is in the first row.  
  Since $n \geq \La_1^{\circledast}$, this diagram always exists.  In our example \eqref{ex1_pour_B_proof}, such a diagram corresponds to
\be\label{ex2_pour_B_proof}
\ta_{5}^{\phantom \perp}\circ e_0^\perp \cdot
	\scalebox{1.1}{$\superY{
	\,&\,&\,& \\
	\,&\,&\yF{} \\
	\,& \\
	\yF{}
	}$}
\supset
	\scalebox{1.1}{$\superY{
	\,&\,&\,& \,& \yF{+}  \\
	\,&\,&+&+ \\
	\,&\, & \yF{!}\\
	+ &+\\
	\yF{!}
	}$}
\ee

The involution $\mathcal{J}$ (which is the same involution as in Section \ref{Sproof_typeIe}) interchanges an empty transmutable box (marked with a $\bullet$) with a transmutable $\PlusX$box (marked with a $\PlusXb$), and vice-versa.  This involution thus pairs all diagrams but one in \eqref{l_expan_BunmsurS} with an identical diagram having opposite sign.  The left-over digram  is the unique digram containing no transmutable box (described above); it is the one associated with the superpartition
\be
\Om = (n, \La_1^{\rm a}, \ldots, \La_{\rm m}^{\rm a}; \La_{\rm{m+1}}^{\rm s}, \ldots)
\ee
as desired.  This ends the proof of \eqref{lemreltypeIabc}.\\
\end{proof}

Finally, there remains to treat the case where $\epsilon=0$, that is, to show that
\be
B_n^{(0)}\,s_\La = s_\Om \qquad \text{if} \qquad \Om^*=(n,\La^*) \quad \text{and} \quad \Om^\circledast = (n, \La^\circledast).
\ee
\begin{proof}
Let $\La$ be a superpartition and let $n\geq \La_1^\circledast$.  Using the defining relation \eqref{B0vsB1}, we can write
\be
B_n^{(0)} \, s_\La = \partial_{\tilde{e}_0} \circ B_n^{(1)}\, s_\La =  \partial_{\tilde{e}_0} s_\Delta
\ee
where $\Delta^*=(n,\La^*), \Delta^\circledast=(n+1, \La^\circledast)$, and where the last step follows from \eqref{lemreltypeIabc}.
  We thus have to show that the action of $\partial_{\tilde{e}_0}$ on a Schur superfunction associated to a diagram starting with a fermionic part (i.e.~the first row contains a circle) produces one Schur function with the same diagram except that the circle  of the first row has been striped.  This follows from the same argument used to prove  \eqref{lemtypeIefermion}
 and Lemma \ref{Lemlemlemdeleo}.  
\end{proof}

\end{appendix}

\bibliography{ref1}
\bibliographystyle{IEEEtranS}
\end{document}